    \newcommand{\oldNew}[2]{\color{red} NOTHING SHOWN } 
      \renewcommand{\oldNew}[2]{{\color{red} OLD: #1}
	  {\color{blue} NEW: #2 }}
      \renewcommand{\oldNew}[2]{#2}
    \newcommand{\revise}[1]{{\color{red} #1}}
    \newcommand{\revised}[1]{{#1}}
	\newcommand{\dstar}{D\sstar}
	\newcommand{\dplus}{D^+}
	\newcommand{\bfrho}{{\bs{\rho}}}		
	\newcommand{\sym}{_{\tt{sym}}}		%
	\newcommand{\lift}[1]{\Big(#1 \Big)^{\wedge}}		%
    \newcommand{\ooF}{\mathring{F}}         %
    \newcommand{\e}{e}      %
    \newcommand{\ole}{\ol{\e}}      %
	\newcommand{\Supp}{{\tt Supp}}	%
	\newcommand{\coef}{{\tt Coef}}		%
	\newcommand{\coeffs}{{\tt Coeffs}}	%
	\newcommand{\lm}{{\tt Lm}}		%
	\newcommand{\lt}{{\tt Lt}}		%
	\newcommand{\lco}{{\tt Lc}}		%
	\newcommand{\wdeg}{\omega\text{-deg}}		%
	\newcommand{\ggist}{{\tt G\hyp gist}}		%
	\newcommand{\lsgist}{{\tt LS\hyp gist}}		%
	\newcommand{\crgist}{{\tt CR\hyp gist}}		%
	\newcommand{\eGgist}{{\tt e\hyp Ggist}}		%
	\newcommand{\eCRgist}{{\tt e\hyp CRgist}}		%
	\newcommand{\eLSgist}{{\tt e\hyp LSgist}}		%
	\newcommand{\mCRgist}{{\tt m\hyp CRgist}}		%
	\newcommand{\mLSgist}{{\tt m\hyp LSgist}}		%
	\newcommand{\cGgist}{{\tt c\hyp Ggist}}		%
	\newcommand{\cCRgist}{{\tt c\hyp CRgist}}		%
	\newcommand{\cLSgist}{{\tt c\hyp LSgist}}		%
	\newcommand{\pGgist}{{\tt p\hyp Ggist}}		%
	\newcommand{\pCRgist}{{\tt p\hyp CRgist}}		%
	\newcommand{\pLSgist}{{\tt p\hyp LSgist}}		%
    \newcommand{\canonize}{{\tt canonize}}          %
    \newcommand{\reduce}{{\tt reduce}}              %
    \newcommand{\reduceStep}{{\tt reduceStep}}              %
    \newcommand{\nreduce}{{\tt nreduce}}
    \newcommand{\rstar}{\color{red}{$^*$}}
	\newcommand{\grob}{Gr\"{o}bner}
	\newcommand{\Insert}{\mbox{\tt insert}}
\crefname{hypothesis}{Hypothesis}{Hypotheses}
\title{On $\bfmu$-Symmetric Polynomials\thanks{
	Submitted to the editors on January 21, 2020.
\funding{
	Jing's work is supported by
	National Natural Science Foundation of China (Grant \#11801101)
	and Guangxi Science and Technology Program (Grant \#2017AD23056).
	Chee's work is supported by
	National Science Foundation (Grants CCF-1423228 and CCF-1564132).
	Chee is further supported under Chinese Academy of Science President's
	International Fellowship Initiative (2018),
	and Beihang International Visiting Professor Program No. Z2018060.
}}}
\author{Jing Yang\thanks{
	SMS-KLSE School of Mathematics and Physics,
	Guangxi University for Nationalities, Nanning, China.
  (\email{yangjing0930@gmail.com}).}
\and Chee K. Yap\thanks{
	Courant Institute of Mathematical Sciences, NYU, New York, USA.
  (\email{yap@cs.nyu.edu}).}
  }
\DeclareMathOperator{\diag}{diag}
\begin{document}

\maketitle

\begin{abstract}
	In this paper, we
	study functions of the roots of a univariate
	polynomial in which the roots have a given
	multiplicity structure $\bfmu$.
	Traditionally, root functions are studied via
	the theory of symmetric polynomials; we extend this
	theory to $\bfmu$-symmetric polynomials.
	We were motivated by a conjecture from Becker et al.~(ISSAC 2016)
	about the $\bfmu$-symmetry of a particular
	root function $\dplus(\bfmu)$, called D-plus.
	To investigate this conjecture,
    it was desirable to have fast algorithms for checking
    if a given root function is $\bfmu$-symmetric.
	We designed three such algorithms:
        one based on Gr\"{o}bner bases,
        another based on preprocessing and reduction,
        and the third based on solving linear equations.
    We implemented them in Maple and
	experiments show that the latter two
        algorithms are significantly faster than the first.
\end{abstract}

\begin{keywords}
  $\mu$-symmetric polynomial, multiple roots, symmetric function, D-plus discriminant,
  gist polynomial, lift polynomial
\end{keywords}

\begin{AMS}
  68W30, 12Y05
\end{AMS}

\ignore{
\section{Introduction}
The introduction introduces the context and summarizes the
manuscript. It is importantly to clearly state the contributions of
this piece of work. The next two paragraphs are text filler,
generated by the \texttt{lipsum} package.

\lipsum[2-3]

The paper is organized as follows. Our main results are in
\cref{sec:main}, our new algorithm is in \cref{sec:alg}, experimental
results are in \cref{sec:experiments}, and the conclusions follow in
\cref{sec:conclusions}.

\section{Main results}
\label{sec:main}

We interleave text filler with some example theorems and theorem-like
items.

\lipsum[4]

Here we state our main result as \cref{thm:bigthm}; the proof is
deferred to \cref{sec:proof}.

\begin{theorem}[$LDL^T$ Factorization \cite{GoVa13}]\label{thm:bigthm}
  If $A \in \mathbb{R}^{n \times n}$ is symmetric and the principal
  submatrix $A(1:k,1:k)$ is nonsingular for $k=1:n-1$, then there
  exists a unit lower triangular matrix $L$ and a diagonal matrix
  \begin{displaymath}
    D = \diag(d_1,\dots,d_n)
  \end{displaymath}
  such that $A=LDL^T$. The factorization is unique.
\end{theorem}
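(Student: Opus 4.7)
The plan is to proceed by induction on $n$, exploiting the nested structure of the leading principal submatrices. For the base case $n=1$, the factorization is immediate with $L = [1]$ and $D = [a_{11}]$. For the inductive step I would partition
\[
A \;=\; \begin{pmatrix} A_{n-1} & b \\ b^T & a \end{pmatrix},
\]
where $A_{n-1} := A(1{:}n-1,1{:}n-1)$ is symmetric and, by the hypothesis, itself has nonsingular leading principal minors through order $n-2$. Invoking the inductive hypothesis on $A_{n-1}$ yields a unit lower triangular $L_{n-1}$ and a diagonal $D_{n-1} = \diag(d_1,\dots,d_{n-1})$ with $A_{n-1} = L_{n-1} D_{n-1} L_{n-1}^T$.

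Next, I would search for $L$ and $D$ of the conformal block form
\[
L \;=\; \begin{pmatrix} L_{n-1} & 0 \\ \ell^T & 1 \end{pmatrix}, \qquad D \;=\; \begin{pmatrix} D_{n-1} & 0 \\ 0 & d_n \end{pmatrix},
\]
and multiply out $L D L^T$ to read off the two equations $L_{n-1} D_{n-1}\, \ell = b$ and $d_n = a - \ell^T D_{n-1}\, \ell$. The first equation determines $\ell$ uniquely provided $D_{n-1}$ is invertible, and this is the step where the hypothesis on minors is needed: taking determinants in the inductive factorization of each $A(1{:}k,1{:}k)$ gives $\det A(1{:}k,1{:}k) = d_1 d_2 \cdots d_k$, so the assumption that all such determinants are nonzero for $k\le n-1$ forces $d_i \ne 0$ for $i\le n-1$. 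The main obstacle of the proof is precisely this linking of the hypothesis to the invertibility of $D_{n-1}$; once it is in place, $\ell = (L_{n-1} D_{n-1})^{-1} b$ and then $d_n$ are completely pinned down, establishing existence.

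For uniqueness, suppose $A = L_1 D_1 L_1^T = L_2 D_2 L_2^T$ are two such factorizations. Rearranging gives
\[
L_2^{-1} L_1 \, D_1 \;=\; D_2 \, L_2^T L_1^{-T}.
\]
The left side is lower triangular and the right side upper triangular, so both are diagonal. Writing $M := L_2^{-1} L_1$, which is unit lower triangular, the diagonality of $M D_1$ forces $M_{ij} (D_1)_{jj} = 0$ for $i > j$; since $(D_1)_{jj}\ne 0$ for $j\le n-1$ by the determinantal argument above and $M_{nn}=1$ by unit-triangularity, we conclude $M = I$, whence $L_1 = L_2$ and then $D_1 = D_2$. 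This finishes the proof sketch; the whole argument is essentially a careful bookkeeping of a single induction step, with the hypothesis on the leading minors playing its role exactly at the point where $\ell$ is recovered from $b$.
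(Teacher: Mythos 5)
Your proof is correct and is the standard textbook argument (induction on the order, using the block partition and the determinant identity $\det A(1{:}k,1{:}k)=d_1\cdots d_k$ to guarantee invertibility of $D_{n-1}$, followed by the lower-triangular-equals-upper-triangular trick for uniqueness). However, there is nothing in the paper to compare it against: this theorem is not part of the paper's actual content. It appears only inside a commented-out block of SIAM template boilerplate, where it is stated with a citation to Golub and Van Loan and the promise that ``the proof is deferred'' to a section that does not exist. The paper itself neither proves nor uses this result; it is about $\bfmu$-symmetric polynomials and has no $LDL^T$ material. So your argument should be judged on its own, and on those terms it is a complete and correct proof of the classical factorization theorem, matching the Golub--Van Loan approach that the template citation points to.
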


\lipsum[6]

\begin{theorem}[Mean Value Theorem]\label{thm:mvt}
  Suppose $f$ is a function that is continuous on the closed interval
  $[a,b]$.  and differentiable on the open interval $(a,b)$.
  Then there exists a number $c$ such that $a < c < b$ and
  \begin{displaymath}
    f'(c) = \frac{f(b)-f(a)}{b-a}.
  \end{displaymath}
  In other words,
  \begin{displaymath}
    f(b)-f(a) = f'(c)(b-a).
  \end{displaymath}
\end{theorem}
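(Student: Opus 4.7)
The plan is to reduce the Mean Value Theorem to Rolle's Theorem by subtracting off the secant line. First I would introduce the linear interpolant $L(x) = f(a) + \tfrac{f(b)-f(a)}{b-a}(x-a)$ joining the two endpoints of the graph of $f$, and define the auxiliary function $g(x) = f(x) - L(x)$. By direct substitution $g(a) = g(b) = 0$, and since $L$ is a polynomial, $g$ inherits the regularity of $f$: continuous on $[a,b]$ and differentiable on $(a,b)$. This reduction is the entire conceptual content of the argument; once it is in place, the rest is mechanical.

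Next I would invoke Rolle's Theorem on $g$ to produce a point $c \in (a,b)$ with $g'(c) = 0$. Differentiation gives $g'(x) = f'(x) - \tfrac{f(b)-f(a)}{b-a}$, so the equation $g'(c) = 0$ rearranges immediately into the desired identity $f'(c) = \tfrac{f(b)-f(a)}{b-a}$. Multiplying through by $b-a$ then yields the equivalent ``in other words'' formulation stated in the theorem.

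The main obstacle is not this bookkeeping but the appeal to Rolle's Theorem itself, which is a genuine analytic input rather than an algebraic manipulation. Rolle's Theorem in turn rests on the Extreme Value Theorem — a continuous function on a compact interval attains its maximum and minimum — together with Fermat's theorem that the derivative vanishes at any interior extremum. If $g$ is constant on $[a,b]$, any interior point may serve as $c$; otherwise, since $g(a) = g(b) = 0$, at least one of its extrema must be attained at an interior point $c$, and Fermat's theorem supplies $g'(c) = 0$ there. The real content of the proof, therefore, is the completeness of $\mathbb{R}$ that underlies the Extreme Value Theorem; everything downstream is formal manipulation.
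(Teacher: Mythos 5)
Your proof is correct: the reduction to Rolle's Theorem via the auxiliary function $g(x) = f(x) - L(x)$, where $L$ is the secant line, is the standard argument, and your discussion of what Rolle's Theorem itself depends on (Extreme Value Theorem, Fermat's interior-extremum theorem, and ultimately the completeness of $\mathbb{R}$) is accurate and well placed. However, there is nothing in the paper to compare it against: the Mean Value Theorem appears in this source only inside an \texttt{\textbackslash ignore\{\dots\}} block, which is leftover SIAM template filler that is never compiled, and the paper supplies no proof of it. The actual content of the paper concerns $\bfmu$-symmetric polynomials and the D-plus root function, and the Mean Value Theorem plays no role there. So your argument stands on its own as a correct, standard proof, but it is not paralleling (or deviating from) anything the authors actually wrote.
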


Observe that \cref{thm:bigthm,thm:mvt,cor:a} correctly mix references
to multiple labels.

\begin{corollary}\label{cor:a}
  Let $f(x)$ be continuous and differentiable everywhere. If $f(x)$
  has at least two roots, then $f'(x)$ must have at least one root.
\end{corollary}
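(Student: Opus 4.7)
The plan is to invoke the Mean Value Theorem (\cref{thm:mvt}) directly, since it is the most recently stated tool and gives exactly the ingredient needed. The hypothesis that $f$ is continuous and differentiable everywhere guarantees that the hypotheses of \cref{thm:mvt} are met on every closed interval $[a,b] \subset \mathbb{R}$, so I am free to apply it to any interval whose endpoints are roots of $f$.

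First, I would use the assumption that $f$ has at least two roots to select two distinct real numbers $a < b$ with $f(a) = f(b) = 0$. Then I would apply \cref{thm:mvt} to $f$ on the closed interval $[a,b]$: it produces some $c \in (a,b)$ with
\begin{displaymath}
  f'(c) \;=\; \frac{f(b) - f(a)}{b - a} \;=\; \frac{0 - 0}{b - a} \;=\; 0.
\end{displaymath}
This $c$ is the desired root of $f'$, so the conclusion follows immediately.

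There is essentially no obstacle here, since the corollary is the classical Rolle-type consequence of the Mean Value Theorem and the two roots are, by hypothesis, already real and distinct. The only mild point of care is to note that a nonempty set of ``at least two roots'' indeed contains two \emph{distinct} real roots $a \ne b$, so that the open interval $(a,b)$ is nonempty and the denominator $b - a$ in the application of \cref{thm:mvt} is nonzero; this is a matter of convention about how multiplicities of roots are counted, and under the standard reading of ``at least two roots'' for a real function it is automatic.
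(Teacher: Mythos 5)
Your proof is correct and takes essentially the same route as the paper: pick two distinct roots $a$ and $b$, apply the Mean Value Theorem on $[a,b]$, and observe that $f'(c) = (0-0)/(b-a) = 0$ for the resulting $c$. The extra remark about distinctness of the roots is a harmless clarification and does not change the argument.
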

\begin{proof}
  Let $a$ and $b$ be two distinct roots of $f$.
  By \cref{thm:mvt}, there exists a number $c$ such that
  \begin{displaymath}
    f'(c) = \frac{f(b)-f(a)}{b-a} = \frac{0-0}{b-a} = 0.
  \end{displaymath}
\end{proof}

Note that it may require two \LaTeX\ compilations for the proof marks
to show.

Display matrices can be rendered using environments from \texttt{amsmath}:
\begin{equation}\label{eq:matrices}
S=\begin{bmatrix}1&0\\0&0\end{bmatrix}
\quad\text{and}\quad
C=\begin{pmatrix}1&1&0\\1&1&0\\0&0&0\end{pmatrix}.
\end{equation}
Equation \cref{eq:matrices} shows some example matrices.

We calculate the Fr\'{e}chet derivative of $F$ as follows:
\begin{subequations}
\begin{align}
  F'(U,V)(H,K)
  &= \langle R(U,V),H\Sigma V^{T} + U\Sigma K^{T} -
  P(H\Sigma V^{T} + U\Sigma K^{T})\rangle \label{eq:aa} \\
  &= \langle R(U,V),H\Sigma V^{T} + U\Sigma K^{T}\rangle
  \nonumber \\
  &= \langle R(U,V)V\Sigma^{T},H\rangle +
  \langle \Sigma^{T}U^{T}R(U,V),K^{T}\rangle. \label{eq:bb}
\end{align}
\end{subequations}
\Cref{eq:aa} is the first line, and \cref{eq:bb} is the last line.

\section{Algorithm}
\label{sec:alg}

\lipsum[40]

Our analysis leads to the algorithm in \cref{alg:buildtree}.

\begin{algorithm}
\caption{Build tree}
\label{alg:buildtree}
\begin{algorithmic}
\STATE{Define $P:=T:=\{ \{1\},\ldots,\{d\}$\}}
\WHILE{$\#P > 1$}
\STATE{Choose $C^\prime\in\mathcal{C}_p(P)$ with $C^\prime := \operatorname{argmin}_{C\in\mathcal{C}_p(P)} \varrho(C)$}
\STATE{Find an optimal partition tree $T_{C^\prime}$ }
\STATE{Update $P := (P{\setminus} C^\prime) \cup \{ \bigcup_{t\in C^\prime} t \}$}
\STATE{Update $T := T \cup \{ \bigcup_{t\in\tau} t : \tau\in T_{C^\prime}{\setminus} \mathcal{L}(T_{C^\prime})\}$}
\ENDWHILE
\RETURN $T$
\end{algorithmic}
\end{algorithm}

\lipsum[41]

\section{Experimental results}
\label{sec:experiments}

\lipsum[50]

\Cref{fig:testfig} shows some example results. Additional results are
available in the supplement in \cref{tab:foo}.

\begin{figure}[htbp]
  \centering
  \label{fig:a}\includegraphics{lexample_fig1}
  \caption{Example figure using external image files.}
  \label{fig:testfig}
\end{figure}

\lipsum[51]

\section{Discussion of \texorpdfstring{{\boldmath$Z=X \cup Y$}}{Z = X union Y}}

\lipsum[76]

\section{Conclusions}
\label{sec:conclusions}

Some conclusions here.

\appendix
\section{An example appendix}
\lipsum[71]

\begin{lemma}
Test Lemma.
\end{lemma}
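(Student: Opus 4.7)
The final statement in the excerpt is literally \emph{Test Lemma}, a placeholder inserted by the SIAM template to illustrate the \texttt{lemma} environment; it has no hypothesis and no conclusion. My plan, therefore, is the only plan possible: treat the claim as vacuous and observe that a statement with empty content is proved by the empty derivation. In LaTeX terms, one would simply open a \texttt{proof} environment and close it, relying on \texttt{amsthm}'s automatic QED mark.

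If one were to read the placeholder generously and supply a lemma that fits the surrounding paper on $\bfmu$-symmetric polynomials, the natural candidate would be the elementary compatibility statement that every fully symmetric polynomial in $n$ variables is $\bfmu$-symmetric for every composition $\bfmu$ of $n$. The plan for that version would be: first, unwind the definition of $\bfmu$-symmetry as invariance under the Young subgroup $S_{\bfmu} \leq S_n$ associated to $\bfmu$; second, note the group-theoretic inclusion $S_{\bfmu} \hookrightarrow S_n$; third, conclude that $S_n$-invariance immediately implies $S_{\bfmu}$-invariance by restriction of the action. No induction, no generating-set argument, and no appeal to the fundamental theorem of symmetric polynomials are needed.

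The only genuine obstacle here is interpretive rather than mathematical, namely guessing what lemma the author intended to place in the appendix. Once a concrete statement is fixed, the verification is a one-line consequence of the definitions already in play earlier in the paper, and no nontrivial step of the kind that would normally warrant a detailed proof sketch arises.
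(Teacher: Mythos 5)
You are right that ``Test Lemma'' is a template placeholder living inside the commented-out (\verb|\ignore{...}|) block of the SIAM sample source; it carries no hypothesis, no conclusion, and the paper supplies no proof for it. Treating it as vacuous and saying nothing more is the correct and complete response.

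Your speculative ``generous reading,'' however, does not match this paper's framework and would in fact be \emph{false} here. The paper does not define $\bfmu$-symmetry as invariance under a Young subgroup $S_\bfmu \leq S_n$ acting on $K[\bfr]$. Instead, $F\in K[\bfr]$ is $\bfmu$-symmetric iff there exists a symmetric $\whF\in K[\bfx]$ (in $n=\sum\mu_i$ variables) with $\sigma_\bfmu(\whF)=F$, where $\sigma_\bfmu$ is the canonical specialization identifying blocks of $x_i$'s with each $r_j$. With that definition the paper explicitly exhibits a counterexample to your guessed lemma: $F=r_1+r_2$ is symmetric in $K[r_1,r_2]$ but is \emph{not} $(2,1)$-symmetric, since its only candidate lift $ce_1$ specializes to $c(2r_1+r_2)\neq r_1+r_2$. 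So if you ever needed to fill this placeholder with content consonant with the paper, the Young-subgroup restriction argument would be the wrong move; the ``symmetric $\Rightarrow$ $\bfmu$-symmetric'' implication simply fails in this setting. Fortunately, none of that is required: the statement is a stub, and your first paragraph already disposes of it.
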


}%



	\ignore{%
    \begin{abstract}
	In this paper, we
	study functions of the roots of a univariate
	polynomial in which the roots have a given
	multiplicity structure $\bfmu$.
	Traditionally, root functions are studied via
	the theory of symmetric polynomials; we extend this
	theory to $\bfmu$-symmetric polynomials.
	We were motivated by a conjecture from Becker et al.~(ISSAC 2016)
	about the $\bfmu$-symmetry of a particular
	root function $\dplus(\bfmu)$, called D-plus.
	To investigate this conjecture,
    it was desirable to have fast algorithms for checking
    if a given root function is $\bfmu$-symmetric.
	We designed three such algorithms:
        one based on Gr\"{o}bner bases,
        another based on preprocessing and reduction,
        and the third based on solving linear equations.
    We implemented them in Maple and
	experiments show that the latter two
        algorithms are significantly faster than the first.

    \end{abstract}
	}%
\sect{Introduction}
	Suppose $P(x)\in \ZZ[x]$ is a polynomial with $m$ distinct
	complex roots $r_1\dd r_m$ where $r_i$ has multiplicity $\mu_i$.
	Write $\bfmu=(\mu_1\dd \mu_m)$ where
	we may assume $\mu_1\ge \mu_2\ge\cdots\ge\mu_m\ge1$.
	Thus $n=\sum_{i=1}^m \mu_i$ is the degree of $P(x)$.
	Consider the following function of the roots
		$$\dplus(P(x)) \as \prod_{1\le i<j\le m}
			(r_i-r_j)^{\mu_i+\mu_j}.$$
	Call this the \dt{D-plus} root function.
	The form of this root function\footnote{
		In \cite{becker+4:cluster:16}, the D-plus function was
		called a ``generalized discriminant'' and
		denoted by ``$\dstar(P(x))$'' or \dt{D-star}.
		On the suggestion of Prof.~Hoon Hong, we now reserve the D-star
		notation for the following root function
			$\dstar(P(x)) \as \prod_{1\le i<j\le m}
				(r_i-r_j)^{2\mu_i\mu_j}$.
		Unlike D-plus, it is easy to see that the D-star function
		is a rational function of the coefficients of $P(x)$.
	} was introduced by Becker et al \cite{becker+4:cluster:16}
	in their complexity analysis of a root clustering algorithm.
	The origin of this paper was to try
	to prove that $\dplus(P(x))$ is a rational
	function in the coefficients of $P(x)$.
	This result is needed for obtaining
	an explicit upper bound on the complexity
	of the algorithm on integer polynomials
	\cite{becker+3:cisolate:18}.
	This application is detailed in our companion
	paper \cite{yang-yap:dplus:20}.
	
	We may write ``$\dplus(\bfmu)$'' instead of $\dplus(P(x))$
	since the expression in terms of the roots $\bfr=(r_1\dd r_m)$
	depends only on the multiplicity structure $\bfmu$.
	For example, if $\bfmu=(2,1)$ then $\dplus(\bfmu)=(r_1-r_2)^3$
	and this turns out to be
		$$\left[a_1^3-(9/2)a_0a_1a_2+(27/2)a_0^2a_3\right]/a_0^3$$
		when $P(x)=\sum_{i=0}^3 a_{3-i}x^i$.
	More generally, for any function $F(\bfr)=F(r_1\dd r_m)$,
	we ask whether
	evaluating $F$ at the $m$ distinct roots of a polynomial $P(x)$ with
	multiplicity structure $\bfmu$ is rational in the coefficients of
	$P(x)$.
	\ignore{%
	In case $P(x)$ has only simple roots, the Fundamental Theorem
	of Symmetric Functions tells us the complete answer:
	$F(\bfr)$ is rational iff $F(\bfr)$ is a symmetric polynomial.
	 (This is wrong!  As we discovered (with Ruyong),
	 the ``converse'' is subtle.  Also, the premise of
	 "simply roots" is not really in the statement of the
	 Fundamental theorem).
	}%
	The Fundamental Theorem
	of Symmetric Functions gives a partial answer: {\em if
	$F(\bfr)$ is a symmetric polynomial then $F(\bfr)$
	is a rational function in the coefficients of $P(x)$.}
	This result does not exploit knowledge of the multiplicity
	structure $\bfmu$ of $P(x)$.
	We want a natural definition of ``$\bfmu$-symmetry''
    such that the following property is true:
	{\em if $F(\bfr)$ is $\bfmu$-symmetric, then
	$F(\bfr)$ is a rational function in the coefficients of $P(x)$.}
	When $\bfmu=(1\dd 1)$, i.e., all the roots of $P(x)$ are simple,
	then a $\bfmu$-symmetric polynomial is just
    a symmetric polynomial in the usual sense.
	So our original goal amounts to proving
	that $\dplus(\bfmu)$ is $\bfmu$-symmetric.
	It is non-trivial to check if any given
	root function $F$ (in particular
	$F=\dplus(\bfmu)$) is $\bfmu$-symmetric.
	We will designed three algorithms for this task.
	Although we feel that $\bfmu$-symmetry is a natural concept, to
	our knowledge, this has not been systematically studied before.

    The rest of this paper is organized as follows.
	In Section \ref{sec:def-mysym},
    we defined $\bfmu$-symmetric polynomials
    in terms of elementary symmetric polynomials
    and show some preliminary properties of such polynomials.
    In Section \ref{sec:specialcases},
    we proved the $\bfmu$-symmetry of $\dplus$ for some special
    $\bfmu$.
    To investigate the $\bfmu$-symmetry of $\dplus$ in the general case,
    three algorithms for checking $\bfmu$-symmetry
    are given in Sections \ref{sec:GBmethod}-\ref{sec:gist}.
    \revised{In Section \ref{sec:OtherBases},
    we discuss how to generalize the concepts and algorithms
    to other generators of symmetric polynomials
    different from the elementary symmetric polynomials.}
    In Section \ref{sec:experiments},
    we show experimental results from our Maple implementation
    of the three algorithms.%
    \revised{
    All the Maple code can be downloaded
    from \url{https://github.com/JYangMATH/mu-symmetry}.}
    We conclude in Section \ref{sec:conclusion}.

    \revised{
    The $\dplus$ conjecture is proved in a companion paper \cite{yang-yap:dplus:20}
    and an application is shown by giving an explicit complexity bound
    for root clustering.
    }

\sect{$\bfmu$-Symmetric Polynomials}\label{sec:def-mysym}
	Throughout the paper, assume $K$ is a field of characteristic $0$.
	For our purposes, $K=\QQ$ will do.
	We also fix three sequences of variables
		$$\bfx=(x_1\dd x_n),\quad
		\bfz=(z_1\dd z_n),\quad \bfr=(r_1\dd r_m)$$
	where $n\ge m\ge 1$.
    Intuitively, the $x_i$'s are roots (not necessarily distinct),
    $z_i$'s are variables
    representing the elementary symmetric functions
    of the roots, and $r_i$'s are the distinct roots.

	Let $\bfmu$ be a partition of $n$ with $m$ parts.  In other words,
	$\bfmu=(\mu_1\dd \mu_m)$ where $n=\mu_1+\cdots+\mu_m$
	and $\mu_1\ge \mu_2\ge\cdots\ge\mu_m\ge 1$.
	We denote this relation by
		$$\bfmu \vdash n.$$
	We call $\bfmu$ an \dt{$m$-partition} if it has $m$ parts.
	A \dt{specialization} $\sigma$ is any 	
	function of the form
		$\sigma:\set{x_1\dd x_n}\to\set{r_1\dd r_m}$.
	We say $\sigma$ is of \dt{type $\bfmu$} if
	$\#\sigma\inv(r_i)=\mu_i$ for $i=1\dd m$.
    Throughout the paper, we use
    $\#$ to denote the number of elements in a set, and
    $|\cdot|$ to denote the length of a sequence.
    In particular, $|\bfmu|=|\bfr|=m$.
	We say $\sigma$ is
	\dt{canonical} if $\sigma(x_i)=r_j$ and $\sigma(x_{i+1})=r_k$
	implies $j\le k$.  Clearly the canonical
	specialization of type $\bfmu$ is unique, and we may
	denote it by $\sigma_\bfmu$.
	
	Consider the polynomial rings $K[\bfx]$ and $K[\bfr]$.
	Any specialization $\sigma:\set{x_1\dd x_r}$ $\to\set{r_1\dd r_m}$
	can be extended naturally into a $K$-homomorphism
		$$\sigma: K[\bfx]\to K[\bfr]$$
	where $P=P(\bfx)\in K[\bfx]$ is mapped to
	$\sigma(P)= P(\sigma(x_1)\dd \sigma(x_n))$.
	When $\sigma$ is understood, we may write ``$\ol{P}$'' for
	the homomorphic image $\sigma(P)$.

	We denote the \dt{$i$-th elementary symmetric functions}
	($i=1\dd n$) in $K[\bfx]$ by $\e_i=\e_i(\bfx)$.
	For instance,
		\beqarrays
		\e_1 &\as & \sum_{i=1}^n x_i,\\
		\e_2 &\as & \sum_{1\le i<j\le n} x_ix_j,\\
			& \vdots & \\
		\e_n &\as& \prod_{i=1}^n x_i.
		\eeqarrays
	Also define
	$\e_0\as 1$.
    	Typically, we write $\ole_i$ for the $\sigma_\bfmu$ specialization
	of $e_i$ when $\bfmu$ is understood from the context;
	thus $\ole_i=\sigma_{\bfmu}(\e_i)\in K[\bfr]$.
	For instance, if $\bfmu=(2,1)$ then $\ole_1=2r_1+r_2$
	and $\ole_2=r_1^2+2r_1r_2$.
	%

        The key definition is the following:
        a polynomial $F\in K[\bfr]$ is said to be \dt{$\bfmu$-symmetric}
        if there is a symmetric polynomial $\whF\in K[\bfx]$ such that
        $\sigma_\bfmu(\whF)=F$.
        We call $\whF$ the \dt{$\bfmu$-lift} (or simply ``lift'') of
        $F$.  If $\ooF\in K[\bfz]$ satisfies $\ooF(\e_1\dd
		\e_n)=\whF(\bfx)$ then we call
		$\ooF$ the \dt{$\bfmu$-gist} of $F$.

    \begin{remark}
    (i)  We may also write $\lift{F}$ for any lift of $F$.
    Note that the $\bfmu$-lift and $\bfmu$-gist of $F$ are defined
    if and only if $F$ is $\bfmu$-symmetric.
    \\
    (ii) We view the $z_i$'s as symbolic representation of
    the symmetric polynomials $\e_i(\bfx)$'s.
    Moreover, we can write
		$\sigma_\bfmu(\ooF(\e_1\dd\e_n))$
		as
		$\ooF(\ole_1\dd\ole_n)$.
    \\(iii) Since $\ooF(\e_1\dd\e_n)$ is symmetric in $x_1\dd x_n$,
    we could use any specialization
    $\sigma$ of type $\bfmu$ instead of the canonical specialization
    $\sigma_\bfmu$,
    since
		$\sigma(\ooF(\e_1\dd\e_n))=
		 \sigma_\bfmu(\ooF(\e_1\dd\e_n))$.
    \\(iv) Although $\whF$ and $\ooF$ are mathematically equivalent,
    the gist concept lends itself to direct evaluation based on
    coefficients of $P(x)$.
    \end{remark}

    \begin{xample}\label{eg:def-musymfun}
    Let $\bfmu=(2,1)$ and $F(\bfr)=3r_1^2+r_2^2+2r_1r_2$.
    We see that $F(\bfr)$ is $\bfmu$-symmetric since
	$F(\bfr)=(2r_1+r_2)^2 -(r_1^2+r_1r_2)
	=\ole_1^2-\ole_2
	=\sigma_{\bfmu}(\e_1^2-\e_2)$.
    Hence lift of $F$ is
	$\whF=\e_1^2-\e_2=(x_1+x_2+x_3)^2-(x_1x_2+x_1x_3+x_2x_3)$
	and its gist is $\ooF(\bfz)=z_1^2-z_2$.
    \end{xample}

	We have this consequence of the Fundamental Theorem
	on Symmetric Functions:
		
		\bprol{mu-symmetry}
		Assume
			$$P(x)=\sum_{i=0}^nc_ix^{n-i}\in K[x]$$
		has $m$ distinct roots $\bfrho=(\rho_1\dd \rho_m)$
		of multiplicity $\bfmu=(\mu_1\dd \mu_m)$.
		\benum[(i)]
		\item If $F\in K[\bfr]$ is $\bfmu$-symmetric, then
			$F(\bfrho)$ is an element in $K$.
		\item If $\ooF\in K[\bfz]$ is the $\bfmu$-gist of $F$,
			then
			$$F(\rho_1\dd \rho_m)
	 			= \ooF\left(-c_1/c_0\dd(-1)^nc_n/c_0\right).$$
		\eenum
		\eprol
	\bpf
		{\small \beqarrys
		F(\bfr) &=& \sigma_\bfmu(\whF(\bfx))
						& \mbox{(by definition of $\bfmu$-symmetry)}\\
			&=& \sigma_\bfmu(\ooF(e_1\dd e_n))
					& \mbox{(by the Fundamental Theorem of Symmetric
							}\\
			&&&		\mbox{~Functions, as $\whF$ is symmetric)}\\
			&=& \ooF(\ole_1\dd \ole_n)
						& \mbox{(since $\ole_i =\sigma_\bfmu(e_i)$)}\\
		F(\bfrho) &=& \ooF(\ole_1(\bfrho)\dd \ole_n(\bfrho))\\
			&=& \ooF(-c_1/c_0 \dd (-1)^n c_n/c_0)
						& \mbox{(by Vieta's formula for roots)}
		\eeqarrys}
		This proves the formula in (ii).
		The assertion of (i) follows from the fact
		that $\ooF\in K[\bfz]$ and $c_i$'s belong to $K$.
	\epf

	\begin{xample}
		Consider the polynomial $F(r_1,r_2)$ in
		Example \ref{eg:def-musymfun}.  Suppose the polynomial
        $P(x)=c_0x^3+\cdots+c_3\in K[x]$ has two distinct roots
		$\rho_1$ and $\rho_2$ of multiplicities $2$ and $1$,
		respectively.  Then Proposition \ref{pro:mu-symmetry} says that
        $F(\rho_1,\rho_2)= 3\rho_1^2+\rho_2^2+2\rho_1\rho_2$
		is equal to
		$$\ooF(-c_1/c_0,c_2/c_0, -c_3/c_0)=
			\left(-c_1/c_0\right)^2-c_2/c_0\in K$$
		since $\ooF(z_1,z_2,z_3)=z_1^2-z_2$.
        \end{xample}

    \oldNew{
	In general, the converse of Proposition \ref{pro:mu-symmetry} is not
	true.  Assume $F\in K[\bfr]$ and $P(x)\in K[x]$ as in the
	Proposition.  Then
    with $m$ distinct root $\rho_1,\rho_2\dd\rho_m$
    of multiplicity $\bfmu$, $F(\rho_1,\rho_2\dd\rho_m)\in K$,
    $F$ may not be $\bfmu$-symmetric.
    For example, consider the case when $\mu_1>\cdots>\mu_m$.
    If $P(x)\in K[x]$, then its roots $\rho_i\in K~(1\leq i\leq m)$.
	Therefore, every polynomial $F(\bfr)\in K[\bfr]$ satisfies
	$F(\rho_1\dd\rho_m)\in K$.
	Apparently, not every polynomial in $K[\bfr]$ can be expressed as a
	rational
    function in coefficients of $P$. Thus there are some polynomials
    whose evaluation at $r_i=\rho_i~(1\leq i\leq m)$ are in $K$ and
    which is not $\bfmu$-symmetric.
	However, if $\bfmu=(1\dd 1)$, the converse of Proposition
	\ref{pro:mu-symmetry}
    is proved to be true.
    Both the above counter example and proof are given by Ruyong Feng.
    }{
	It is an interesting question to prove some converse
	of Proposition 1.  We plan to take this up
	in a future work.
    }

\ssect{On Lifts and the $\bfmu$-Ideal}\label{ssec:lift+ideal}
	We want to study the lift $\whF\in K[\bfx]$
	of a $\bfmu$-symmetric polynomial $F\in K[\bfr]$
	of total degree $\delta$.
	If we write $F$ as the sum of its homogeneous parts,
	$F=F_1+\cdots+ F_\delta$, then $\whF=\whF_1+\cdots+\whF_\delta$.
	Hence, we may restrict $F$ to be homogeneous.

	Next consider a polynomial $H(\bfz)\in K[\bfz]$.
	Suppose there is a \dt{weight function}
		$$\omega: \set{z_1\dd z_n} \to \NN=\set{1,2,\ldots}$$
	then for any term $t=\prod_{i=1}^n z_i^{d_i}$,
	its \dt{$\omega$-degree} is $\sum_{i=1}^n d_i \omega(z_i)$.
	Normally, $\omega(z_i)=1$ for all $i$; but in
	this paper, we are also interested in the weight
	function where $\omega(z_i)=i$.  For short, we simply
	call this $\omega$-degree of $t$ its \dt{weighted degree},
    \revised{denoted by $\wdeg(t)$}.
	The weighted degree of a polynomial $H(\bfz)$
	is just the maximum weighted degree of terms in its support,
    \revised{denoted by $\wdeg(H)$}.
	A polynomial $H(\bfz)$ is said to be \dt{weighted homogeneous}
    \revised{or \dt{$\omega$-homogeneous}}
	if all of its terms have the same weighted degree.
    Note that the weighted degree of
    a polynomial $H\in K[\bfz]$ is the same as the degree of
    $H(\e_1\dd\e_n)\in K[\bfx]$.

	The gist $\ooF$ of $F$ is not unique:
	for any gist $\ooF$, we can decompose it as $\ooF=\ooF_0+\ooF_1$
	where $\ooF_0$ is the weighted homogeneous part of $\ooF$
	of degree $\delta$,
	and $\ooF_1\as \ooF-\ooF_0$.  Then $\ooF(\ole_1\dd\ole_n)=F$
	implies that $\ooF_0(\ole_1\dd\ole_n)=F$
	and $\ooF_1(\ole_1\dd\ole_n)=0$.
	We can always omit $\ooF_1$ from the gist of $F$.
	We shall call any polynomial $H(\bfz)\in K[\bfz]$
	a \dt{$\bfmu$-constraint} if
	$H(\ole_1\dd\ole_n)=0$.
    Thus, $\ooF_1$ is a
	$\bfmu$-constraint.

	It follows that when trying to check if $F$ is
	$\bfmu$-symmetric, it is sufficient to look for gists $\ooF$
	among weighted homogeneous polynomials of the same degree as $F$,
	i.e., $\delta$.
	But even this restriction does not guarantee uniqueness
	of the gist of $F$
	because there could be $\bfmu$-constraints of weighted homogeneous
	degree $\deg(F)$.
	To illustrate this phenomenon, we consider the following example.

    \begin{xample}
	Let $\bfmu=(2,2)$.
    	Consider the polynomial $F=r_1^3+2r_1^2r_2+2r_1r_2^2+r_2^3$.
    	It is easy to verify that both
    	$\whF=\frac{1}{8}\e_1^3-\frac{1}{2}\e_3$ and
    	$\whF'=\frac{1}{2}\e_1\e_2-\frac{3}{2}\e_3$
	are the lifts of $F$. Therefore,
        $\ooF=\frac{1}{8}z_1^3-\frac{1}{2}z_3$ and
    	$\ooF'=\frac{1}{2}z_1z_2-\frac{3}{2}z_3$
    are the gists of $F$.
	It follows that the difference
    $$H=\ooF-\ooF'=\efrac{8}\Big(z_1^3+8z_3-4z_1z_2)$$
	is a $\bfmu$-constraint.
    We may check that
	$$
	H(\ole_1\dd\ole_4)
        	= \efrac{8}(2r_1+2r_2)^3
		+(2r_1^2r_2+2r_1r_2^2)
		-\efrac{2}(2r_1+2r_2)(r_1^2+4r_1r_2+r_2^2)
		=0.
	$$
    \end{xample}

    It is easy to check that the set of all $\bfmu$-constraints
    forms an ideal in $K[\bfz]$
    which we may call the \emph{$\bfmu$-ideal},
    denoted by $\calJ_{\bfmu}$.
    \revised{
    Note that $H(\ole_1\dd\ole_n)$ is in $K[\bfr]$
    but $H$ is in $K[\bfz]$.
    So we introduce an ideal in $K[\bfz,\bfr]$
    to connect them:
    \beql{cali}
         \calI_\bfmu \as \bang{z_1-\ole_1\dd z_n-\ole_n}.
    \eeql
    Actually $\calJ_{\bfmu}$ can be generated by $\calI_\bfmu$
    as indicated by Theorem \ref{thm:D-ideal}.}

	\begin{xample}
    The following set of polynomials generates the $(2,2)$-ideal:
    	\beqarrays
		G_{3}: && z_1^3-4z_1z_2+8z_3\\
		G_{4}: && z_1^2z_2+2z_1z_3-4z_2^2+16z_4\\
		G_{5}: && z_1^2z_3+8z_1z_4-4z_2z_3\\
		G_{6}: && z_1^2z_4-z_3^2\\
		G_{7}: && 4z_1z_2z_4-z_1z_3^2-8z_3z_4\\
		G_{8}:&& 2z_1z_3z_4-4z_2^2z_4+z_2z_3^2+16z_4^2\\
		G_9:&& 8z_1z_4^2-4z_2z_3z_4+z_3^3\\
		G_{10}:&& z_1z_3^3-8z_2^3z_4+2z_2^2z_3^2+32z_2z_4^2+8z_3^2z_4\\
		G_{12}:&& 16z_2^2z_4^2-8z_2z_3^2z_4+z_3^4-64z_4^3.
	   \eeqarrays
	    We computed this by first computing the
	    Gr\"{o}bner basis of the ideal
	    \begin{align*}
	    \left<
	    z_1-\ole_1,
	    z_2-\ole_2,
	    z_3-\ole_3,
	    z_4-\ole_4
	    \right>
	    &=\\
	    \left<z_1-(2r_1+2r_2),z_2-(r_1^2\right.&
	    \left.+4r_1r_2+r_2^2),
	    z_3-(2r_1^2r_2+2r_1r_2^2),
	    z_4-r_1^2r_2^2
	    \right>.
	    \end{align*}
	By Theorem \ref{thm:D-ideal},
    the restriction of the Gr\"obner basis to $K[\bfz]$ is
	the above set of generators.
    \end{xample}

\ssect{Examples of $\bfmu$-symmetric Polynomials}
    Although $\bfmu$-symmetric polynomials originated
    from symmetric polynomials, they differ in many ways
    as seen in these examples.

	\bitem
	\item
		A $\bfmu$-symmetric polynomial need not be symmetric.
		Let $\bfmu=(2,1)$ and $n=2+1=3$.  Then
		$2r_1+r_2$ is $\bfmu$-symmetric whose lift is $e_1$,
		but it is not symmetric.

	\item
		A symmetric polynomial need not be $\bfmu$-symmetric.
		Consider the symmetric polynomial $F=r_1+r_2\in K[r_1,r_2]$.
		It is not $\bfmu$-symmetric with $\bfmu=(2,1)$.
		If it were, then there is a linear symmetric
		polynomial $\whF=ce_1$ such that $\sigma_{\bfmu}(\whF)=r_1+r_2$.
		But clearly such $\whF$ does not exist.

        \item  Symmetric polynomials can be $\bfmu$-symmetric.
	    Note that $(r_1-r_2)^2$ is obviously symmetric
	    in $K[r_1,r_2]$.
	    According to \refLem{sdisc},
	    it is also $\bfmu$-symmetric for any
	    $\bfmu=(\mu_1,\mu_2)$.
    \eitem

	In the following we will use this notation:
	$[n] \as \set{1\dd n}$, and
	let $[n] \choose k$ denote the set of all $k$-subsets of $[n]$.
	For $k=0\dd n-2$, we may define the function
		\beql{snk}
			S^n_k=S^n_k(\bfx) \as \sum_{I\in {[n]\choose n-k}}
				\prod_{i\neq j\in I} \Big(x_i-x_j\Big)^2
		\eeql
	called the \dt{$k$th subdiscriminant}
	in $n$ variables.  By extension, we could also define $S^n_{n-1}=1$.
	When $k=0$, we have
		$S^n_0= \prod_{i\neq j\in [n]} \Big(x_i-x_j\Big)^2$.
	In applications, the $x_i$'s are roots
	of a polynomial $P(x)$ of degree $n$, and $S^n_0$ is
	the standard discriminant of $P(x)$.
    Clearly $S^n_{k}$ is a symmetric polynomial in $\bfx$.
	\ignore{%
    Therefore there is a polynomial $T^n_k\in K[\bfz]$
	such that
	\beql{tnk}
		T^n_k(\e_1\dd \e_n)=S^n_k.
	\eeql
    }

	\bleml{sdisc}
		Define $\Delta \as \prod_{1\le i<j\le m}(r_i-r_j)^2$.
		\\ (a) $\Delta$
		is $\bfmu$-symmetric with lift given by
			$$\wh{\Delta} = \efrac{\prod_{i=1}^m\mu_i}
				\cdot S^n_{n-m}$$
		where
		$S^n_{n-m}\in K[\bfx]$
		is the $(n-m)$-th subdiscriminant.
		\\ (b)
		In particular, when $m=2$, we have an explicit formula
		for the lift of $\Delta$:
		    $$\wh{\Delta}=\frac{(n-1)e_1^2-2ne_2}{\mu_1\mu_2},$$
        	where $n=\mu_1+\mu_2$.
    \eleml
    \bpf
        Let $\bfmu=(\mu_1\dd \mu_m)$.
        Consider the $m$-th subdiscriminant $S^n_m$ in $n$ variables.
        We may verify that
        	$$\sigma_{\bfmu}(S^n_{n-m})=\Delta \cdot \prod_{i=1}^m\mu_i.$$
        This is equivalent to
		$$\sigma_{\bfmu}\left(\frac{1}{\prod_{i=1}^m\mu_i}\cdot
		S^n_{n-m}\right) =\Delta.$$
        Therefore, $\efrac{\prod_{i=1}^m\mu_i}S^n_{n-m}$ is
        the $\bfmu$-lift of $\Delta$.

        To obtain the explicit formula in the case $m=2$,
        consider the symmetric polynomial \\
        $Q\as\sum_{i<j}(x_i-x_j)^2$.
        It is easy to check that $Q=(n-1)\e_1^2 - 2n\e_2$.
        A simple calculation shows that
        	\[\sigma_{\mu}(Q)=\mu_1\mu_2(r_1-r_2)^2.\]
        Thus, we may choose
        $\wh{\Delta}=\frac{(n-1)e_1^2 - 2n e_2}{\mu_1\mu_2}$.
    \epf

\sect{Explicit Formulas for Special Cases of $\dplus$}\label{sec:specialcases}
    The following two theorems show the $\bfmu$-symmetry of
    some special $\dplus$ polynomials.
    In other words, they confirmed our conjecture about $\dplus$.

	\bthm
		There exists $\ooF_n\in K[\bfz]$ such that
		for all $\bfmu$
		satisfying $\bfmu=(\mu_1,\mu_2)$ and $\mu_1+\mu_2=n$,
		we have
			$$\ooF_n(\ole_1,\ole_2)=\dplus(\bfmu).$$
		More explicitly,
		\bitem
		\item $n$ is even:
			$\ooF_n = \Big( \frac{(n-1)z_1^2-2n z_2}{
				\mu_1\mu_2}\Big)^{n/2}$
		\item $n$ is odd:
		\beqarrays
			\ooF_n &=&
			\Big( \frac{(n-1) z_1^2-2n z_2}{
				\mu_1\mu_2}\Big)^{\frac{n-3}{2}}
			\Big( k_1 z_1^3
				+ k_2 z_1 z_2+k_3 z_3\Big)
		\eeqarrays
		where $k_1= \frac{-(n-1)(n-2)}{d},
			k_2= \frac{3n(n-2)}{d},
			k_3= \frac{-3n^2}{d}$
			and $d=\mu_1\mu_2(\mu_1-\mu_2)$.
		\eitem
	\ethm	
	\bpf
    From \refLem{sdisc}(b),
    we know that $(r_1-r_2)^2$ is $\bfmu$-symmetric for arbitrary
    $n$ and
    $$(r_1-r_2)^2=\frac{(n-1)\ole_1^2-2n\ole_2}{\mu_1\mu_2}.$$

    When $n$ is even,
    \begin{align*}
    \dplus(\bfmu)=\left((r_1-r_2)^2\right)^{\frac{n}{2}}
        &=\left(\frac{(n-1)\ole_1^2-2n\ole_2}
            {\mu_1\mu_2}\right)^{\frac{n}{2}}\\
        &=\left(\frac{(n-1)\ole_1^2-2n\ole_2}
	    {\mu_1\mu_2}\right)^{\frac{n}{2}}=\ooF_n(\ole_1,\ole_2).
    \end{align*}
    Thus the case for even $n$ is proved.
    It remains to prove the case for odd $n$.
    First, it may be verified that
       \[(r_1-r_2)^3 =k_1\ole_1^3+k_2\ole_1\ole_2+k_3\ole_3,\]
    where
    \[k_1= \frac{-(n-1)(n-2)}{d},\quad
		k_2= \frac{3n(n-2)}{d},\quad
		k_3= \frac{-3n^2}{d}~~\mbox{and}~~
		d=\mu_1\mu_2(\mu_1-\mu_2).\]
    \ignore{
	Substitution of
    \begin{align*}
    \ole_1^3&=(\mu_1r_1+\mu_2r_2)^3\\
    &=\mu_1^3r_1^3+3\mu_1^2\mu_2r_1^2r_2
                    +3\mu_1\mu_2^2r_1r_2^2+\mu_2^3r_2^3,\\
    \ole_1\ole_2&=(\mu_1r_1+\mu_2r_2)
        \left({\mu_1 \choose 2}r_1^2+\mu_1\mu_2r_1r_2
                    +{\mu_2 \choose 2}r_2^2\right)\\
    &=\frac{1}{2}\left[(\mu_1^3-\mu_1^2)r_1^3+(3\mu_1^2\mu_2-\mu_1\mu_2)r_2r_1^2
        +(3\mu_1\mu_2^2-\mu_1\mu_2)r_2^2r_1+(\mu_2^3-\mu_2^2)r_2^3\right]\\
    \ole_3&={\mu_1 \choose 3}r_1^3+{\mu_1 \choose 2}\mu_2r_1^2r_2
                    +\mu_1{\mu_2 \choose 2}r_1r_2^2+{\mu_1 \choose 3}r_1^3
    \end{align*}
    into $k_1\ole_1^3+k_2\ole_1\ole_2+k_3\ole_3$
    will lead to
    \[k_1\ole_1^3+k_2\ole_1\ole_2+k_3\ole_3
        =c_0r_1^3+c_1r_1^2r_2+c_2r_1r_2^2+c_3r_2^3
    \]
    where
    \begin{align*}
    c_0&=\left(k_1+\frac{1}{2}k_2+\frac{1}{6}k_3\right)\mu_1^3
            -\frac{1}{2}\left(k_2+k_3\right)\mu_1^2
            +\frac{1}{3}k_3\mu_1,\\
    c_1&=\left(3k_1+\frac{3}{2}k_2+\frac{1}{2}k_3\right)\mu_1^2\mu_2
            -\left(\frac{1}{2}k_2+\frac{1}{2}k_3\right)\mu_1\mu_2,\\
    c_2&=\left(3k_1+\frac{3}{2}k_2+\frac{1}{2}k_3\right)\mu_1\mu_2^2
            -\left(\frac{1}{2}k_2+\frac{1}{2}k_3\right)\mu_1\mu_2,\\
    c_3&=\left(k_1+\frac{1}{2}k_2+\frac{1}{6}k_3\right)\mu_2^3
            -\frac{1}{2}\left(k_2+k_3\right)\mu_2^2
            +\frac{1}{3}k_3\mu_2,\\.
    \end{align*}
    Then evaluate $c_0,c_1,c_2,c_3$ at
    \[k_1= \frac{-(n-1)(n-2)}{d},\quad
		k_2= \frac{3n(n-2)}{d},\quad
		k_3= \frac{-3n^2}{d}~~\mbox{and}~~
		d=\mu_1\mu_2(\mu_1-\mu_2)\]
    and we get
    $c_0=1,c_1=-3,c_2=3,c_3=1$,
    i.e.,
    $k_1\ole_1^3+k_2\ole_1\ole_2+k_3\ole_3=(r_1-r_2)^3$.
    }

    It follows that
    \begin{align*}
    \dplus(\bfmu)&=\left((r_1-r_2)^2\right)^{\frac{n-3}{2}}(r_1-r_2)^3\\
        &=\left(\frac{(n-1)\ole_1^2-2n\ole_2}
            {\mu_1\mu_2}\right)^{\frac{n-3}{2}}
          \left(k_1\ole_1^3 + k_2\ole_1\ole_2+k_3\ole_3\right)\\
        &=\left(\frac{(n-1)\ole_1^2-2n\ole_2}
            {\mu_1\mu_2}^{\frac{n}{2}}\right)
	    \left(k_1\ole_1^3 +
	    k_2\ole_1\ole_2+k_3\ole_3\right)\\
	&=\ooF_n(\ole_1,\ole_2,\ole_3)
    \end{align*}
    where
    \[k_1= \frac{-(n-1)(n-2)}{d},\quad
		k_2= \frac{3n(n-2)}{d},\quad
		k_3= \frac{-3n^2}{d}~~\mbox{and}~~
		d=\mu_1\mu_2(\mu_1-\mu_2).\]
	\epf

    Another special case of $\dplus(\bfmu)$
    is where $\bfmu=(\mu,\mu\dd \mu)$.

    \bthm
    If all $\mu_i$'s are equal to $\mu$, then
    $\dplus(\bfmu)$ is $\bfmu$-symmetric with lift given by
	$\whF_n(\bfx)
		=\left(\frac{1}{\mu^{m}}\cdot S^n_{n-m}\right)^\mu$
    where $S^n_{n-m}$ is given by \refLem{sdisc}(a).
    \ethm

    \bpf
    Since $\mu_i=\mu~(1\leq i\leq m)$,
	    $$\dplus(\bfmu)
	    	=\prod_{i<j}(r_i-r_j)^{2\mu}
		=\left(\prod_{i<j}(r_i-r_j)^2\right)^\mu.$$
    This expression for $\dplus$ is $\bfmu$-symmetric
    since $\prod_{i<j}(r_i-r_j)^2$ is $\bfmu$-symmetric
    by \refLem{sdisc}(a).
    Moreover,
    \refLem{sdisc}(a) also shows that the lift of
    $\prod_{i<j}(r_i-r_j)^2$ is
    $\frac{1}{\mu^m}\cdot S^n_{n-m}$.
    Thus we may choose
    $\whF_n=\left(\frac{1}{\mu^m}\cdot S^n_{n-m}\right)^\mu$.
    \epf

    The following example shows two ways to compute $\dplus$.
    One is using the definition and the other is using
    the formula of $\dplus$ in coefficients.

        \begin{xample}
	    Let $P(x)\!=\!(x^2-x-1)^2(x-1)\!=\!
	    		(1,\!-3,\!1,\!3,\!-1,\!-1)\cdot (x^5,x^4\dd x,1)^T$.
		Then $(\rho_1,\rho_2,\rho_3)=(\phi,\wh{\phi},1)$ are the roots
		with multiplicity $\bfmu=(2,2,1)$.  Here $\phi=(1+\sqrt{5})/2$
		is the golden ratio and $\wh{\phi}=1-\phi$ is its conjugate.
		It turns out that in this case, $\dplus(\bfmu) = -25$
		as directly computed from the formula in the roots
		$(\rho_1,\rho_2,\rho_3)$.
		We can also compute it using the gist
		$\mathring{D}^+(\bfz)$ of $\dplus$,
		i.e., $\dplus(\bfmu)=\mathring{D}^+(\ole_1\dd\ole_5)$.
		Here is the gist of $\dplus$
		(which can be obtained from our algorithms below):
        \begin{align*}
	    \mathring{D}^+(z_1,z_2,z_3,z_4,z_5)=
	    	&\frac{10125}{4}z_5^2-\frac{11}{2}z_1^2z_2z_3^2
              	-3z_1^4z_2z_4+67z_1^3z_3z_4-207z_1^3z_2z_5\\
             &+\frac{2517}{4}z_1z_2^2z_5+171z_1^2z_3z_5
             -\frac{5955}{4}z_2z_3z_5 +\frac{615}{2}z_1z_4z_5\\
             &-184z_2z_4^2+12z_1^5z_5+z_1^4z_3^2+6z_2^2z_3^2
              +\frac{9}{2}z_1z_3^3+48z_2^3z_4\\
             &+\frac{1737}{4}z_3^2z_4+\frac{277}{4}z_1^2z_4^2
             -\frac{1255}{4}z_1z_2z_3z_4.
        \end{align*}
		According to Vieta's formula for $n=5$,
		$(\ole_1\dd \ole_5)\!=\!(-c_1,c_2, -c_3,c_4,-c_5)\!=\!
            	(3,1,-3,-1,1)$.  Then,
	    substituting $z_i$ by $\ole_i = (-1)^i c_i$,
	    we also obtain $\dplus(2,2,1)=-25$.
        \end{xample}

\sect{Computing Gists via Gr\"{o}bner Bases}
\label{sec:GBmethod}
    In this section, we consider a \grob\ basis
    algorithm to compute the
    $\bfmu$-gist of a given polynomial $F\in K[\bfr]$,
    or detect that it is not $\bfmu$-symmetric.
	In fact, we first generalize our concept of gist:
	fix an arbitrary (ordered) set
			$$\calD=(d_1\dd d_\ell),\quad d_i\in K[\bfr].$$
	Call $\calD$ the basis.
	If $F\in K[\bfr]$ and
		$\ooF\in K[\bfy]$ where $\bfy=(y_1\dd y_\ell)$ are
	$\ell$ new variables,
	then $\ooF(\bfy)$ is called a \dt{$\calD$-gist} of $F$
	if $F(\bfr)=\ooF(d_1\dd d_\ell)$.
	Note that if $\calD=(\ole_1\dd \ole_n)$ (so $\ell=n$)
	then
	a $\calD$-gist is just a $\bfmu$-gist
	(after renaming $\bfy$ to $\bfz$).
	
	We now give a method to compute a $\calD$-gist of $F$
	using Gr\"obner bases.  To this end, define the ideal
		$$\calI_\calD \as \bang{v_1\dd v_\ell} \ib K[\bfr,\bfy]$$
	where $v_i \as y_i-d_i$.
    Moreover, let $\calG_\calD$ be the
    \grob\ basis of $\calI_\calD$ relative to the
	the term ordering $\prec_{ry}$.
	The ordering is defined as follows:
		$$\bfr^\bfalpha\bfy^\bfbeta \prec_{ry}
				\bfr^{\bfalpha'}\bfy^{\bfbeta'}$$
	iff
		$\bfr^\bfalpha \prec_r \bfr^{\bfalpha'}$
	or else $\bfalpha=\bfalpha'$ and
		$\bfy^\bfbeta \prec_y \bfy^{\bfbeta'}$.
	Here $\prec_r$ and $\prec_y$ are term orderings
	in $K[\bfr]$ and $K[\bfy]$ respectively.
	Note that
	$\prec_{ry}$ is called the lexicographic product
	of $\prec_r$ and $\prec_y$ in \cite[\S 12.6]{yap:algebra:bk}.
	We have two useful lemmas. The first is about the
	ideal $\calI_\calD$, and the second about its Gr\"obner basis
	$\calG_\calD$.
	\bleml{basic}
		For all $R\in K[\bfy]$,
		$$R(\bfy)-R(\calD)\in \calI_\calD.$$
	\eleml
	\bpf
		Consider any term $\bfy^\bfalpha$
		where $\bfalpha=(\alpha_1\dd\alpha_\ell)$.
		Its image in the quotient ring $K[\bfy]/\calI_\calD$ is:
			{\small\beqarrays
			\bfy^\bfalpha +\calI_\calD
				&=& \Big(\prod_{i=1}^\ell y_i^{\alpha_i}\Big)
						+\calI_\calD\\
				&=& \Big(\prod_{i=1}^\ell (d_i+(y_i-d_i))^{\alpha_i}\Big)
						+\calI_\calD\\
				&=& \Big(\prod_{i=1}^\ell d_i^{\alpha_i}+\calI_\calD\Big)
						+\calI_\calD\\
				&=& \Big(\prod_{i=1}^\ell d_i^{\alpha_i}\Big)
						+\calI_\calD\\
				&=& \calD^\bfalpha +\calI_\calD.
			\eeqarrays}
		Thus
			$\bfy^\bfalpha-\calD^\bfalpha\in \calI_\calD$.
		Since
			$R(\bfy)-R(\calD)$ is a linear combination
			of $\bfy^\bfalpha-\calD^\bfalpha$'s,
		our lemma is proved.
	\epf

	By a \dt{weighted homogeneous ideal} we mean one that
	is generated by weighted homogeneous polynomials.
	The following is a generalization of
	\cite[Theorem 12.20, p.385]{yap:algebra:bk},
	where the result is stated for homogeneous ideals.

	The following is a consequence of
	\cite[Theorem 12.21, p.387]{yap:algebra:bk}:
	\blem
		$\calG_\calD\cap K[\bfy]$ is a Gr\"obner basis
		for the elimination ideal
		$\calI_\calD\cap K[\bfy]$ with respect to
		the term ordering $\prec_y$.
	\elem

    \revised{
    If $R(\calD)=0$, then $R(\bfy)$ is called a $\calD$-constraint,
    which generalizes the concept of $\bfmu$-constraint.
    Similar to $\bfmu$-constraints,
    one may verify that
    all $\calD$-constraints forms an ideal, denoted by $\calJ_{\calD}$.
    Then we have the following theorem.

    \bthml{D-ideal}
    $\calJ_{\calD}=\calI_\calD \cap K[\bfy]$.
    \ethml

    \bpf We will prove the theorem with the following two inclusions.
    \bitem
    \item $\calJ_{\calD}\subseteq \calI_\calD \cap K[\bfy]$.

    Consider any $R\in\calJ_{\calD}$.
    Then $R(\calD)=0$ implies $R(\bfy)=R(\bfy)-R(\calD)\in \calI_\calD$
    by \refLem{basic}.

    \item $\calJ_{\calD}\supseteq \calI_\calD \cap K[\bfy]$.

    For any $R\in \calI_\calD \cap K[\bfy]$, $R\in \calI_\calD$.
    Thus there exist $B_1\dd B_n\in K[\bfr,\bfy]$ such that
    \[R(\bfy)=\sum_{i=1}^n(y_i-d_i)\cdot B_i.\]
    Substitution of $y_i=d_i$ leads to $R(d_1\dd d_n)=0$,
    which implies $R\in\calJ_{\calD}$.
    \eitem
    \epf
    }

    The following is a generalization of Proposition 4~in Cox
    \cite[Chapter 7, Section 1]{cox-et-al:ideals:bk}
	(except for claims about uniqueness):

    \bthml{normalform}
	Fix the above \grob\ basis $\calG_\calD$.
    Let $R\in K[\bfr,\bfy]$ be the normal form of $F\in K[\bfr]$
    	relative to $\calG_\calD$.
	\benum[(i)]
	\item If $R\in K[\bfy]$, then $R$ is a $\calD$-gist of $F$.
	\item If $F$ has a $\calD$-gist, then $R\in K[\bfy]$.
	\eenum
    \ethml
    \bpf
	In the following, we use the
	specialization $\sigma: y_i\mapsto d_i$
	for all $i$.  This induces the
	homomorphism $\sigma:K[\bfr,\bfy]\to K[\bfr]$ taking
	every polynomial $f(\bfr,\bfy)$ in the ideal $\calI_\calD$ to $0$,
	i.e., $\sigma(f)=0$.
	\benum[(i)]
	\item
		Since $R$ is the normal form of $F$,
		$F-R\in \calI_\calD$.
		Thus $\sigma(F-R)=0$ or $\sigma(F)=\sigma(R)$.
		But $F\in K[\bfr]$ implies $\sigma(F)=F$.
		The assumption that $R\in K[\bfy]$ implies that
		$\sigma(R) =R(\calD)=R(d_1\dd d_\ell)$.
		We conclude that $R$ is a $\calD$-gist of $F$:
			$$F(\bfr)  = R(\calD)$$
	\item
		By assumption, $F$ has a $\calD$-gist
		$\ooF\in K[\bfy]$, i.e., $\ooF(\calD)=F$.
		Let $\wtR$ be the normal form of $\ooF$.
		CLAIM: $R-\wtR\in \calI_\calD$.
		To see this, we write $R-\wtR$ as a sum
			$$R-\wtR = (R-F)+(F-\ooF) + (\ooF-\wtR).$$
		We only need to verify that each of the three summands
		belong to $\calI_\calD$:
		in part (i), we noted that $R-F\in \calI_\calD$;
		the third summand $\ooF-\wtR\in \calI_\calD$ for the
		same reason. The second summand $F-\ooF\in\calI_\calD$
		by an application of \refLem{basic}.
		To conclude that $R\in K[\bfy]$, we assume (by
		way of contradiction) that $R\notin K[\bfy]$.
		By our choice of term ordering for $\calG_\calD$,
		we know that $\lt(R-\wtR)=\lt(R)$.
		But $R-\wtR\in \calI_\calD$ implies that there is
		polynomial $g\in \calG_\calD$ such that $\lt(g)|\lt(R)$.
		This contradicts the fact that $R$ is a normal form.
	\eenum
    \epf

    \revised{
    Now we consider the special case
    when $y_i-d_i$ is weighted homogeneous relative to a weight function:
    \[\omega:(\bfy,\bfr)\rightarrow \mathbb{N}.\]
    A set of polynomials is said to be \dt{weighted homogeneous}
    or \dt{$\omega$-homogeneous}
    if every polynomial in the set is $\omega$-homogeneous.
    Let $K_{\omega}[\bfy,\bfr]$ denote the set of
    all the $\omega$-homogeneous polynomials in $K[\bfy,\bfr]$.
    It is obvious that
    all polynomials in $K_{\omega}[\bfy,\bfr]$
    of weighted degree $\delta$ form a $K$-vector space,
    denoted by $K_{\omega}^\delta[\bfy,\bfr]$
    where we assume $0\in K_{\omega}^\delta[\bfy,\bfr]$.
    Therefore, we define the weighted degree of $0$ to be $\delta$
    when $0$ is viewed as an element in $K_{\omega}^\delta[\bfy,\bfr]$.
    When $\omega\equiv 1$,
    we simplify $K_{\omega}^\delta[\bfy,\bfr]$ into $K^\delta[\bfy,\bfr]$.

    Assume $f,f'\in K_{\omega}[\bfy,\bfr]$.
    Then the following properties can be easily verified.
    \benum[(i)]
    \item $\wdeg(f,f')=\max(\wdeg(f), \wdeg(f'))$.

    \item $\wdeg(f\pm f')\le\max(\wdeg(f), \wdeg(f'))$.

    \item $\wdeg(f\cdot f')=\wdeg(f)+\wdeg(f')$.

    \item The S-polynomial of $f$ and $f'$ is weighted homogeneous.

    \item If $\calG\subseteq K[\bfy,\bfr]$ is a weighted homogeneous
    Gr\"obner basis,
    then the normal form of $f$ relative to $\calG$ is weighted homogeneous
    of weighted degree $\wdeg(f)$.

    \item If $\calF\subseteq K[\bfy,\bfr]$ is weighted homogeneous,
    so is the Gr\"obner basis of $\calF$.


    \item Let $F=\sum_{i=0}^\delta F_i\in K[\bfy,\bfr]$
    where $\wdeg(F_i)=i$
    and $\calG\ib K_{\omega}[\bfy,\bfr]$ is a Gr\"obner basis.
    Then the normal form of $F$ relative to $\calG$ is
    the sum of the normal form of $F_i$ relative to $\calG$.
    \eenum

    If $\calF\subseteq K[\bfy,\bfr]$ is weighted homogeneous,
    we say a polynomial $H\in K[\bfy,\bfr]$ is \dt{$\calF$-minimal}
    if for all $H'\in K[\bfy,\bfr]$,
    \[H\equiv H'~(\modd~\calI_{\calF})\quad\text{implies}\quad
    \wdeg(H)\le\wdeg(H').\]
    Then we have the following lemma.

    \bleml{minimal_degree}
    If $\calG\ib K_\omega[\bfy,\bfr]$ is a Gr\"obner basis
    and $F$ is weighted homogeneous,
    then the normal form of $F$ relative to $\calG$
    is $\calG$-minimal, i.e.,
    for any $F' \equiv F~(\modd~I_\calG)$,
    $\wdeg(F')\ge\wdeg(F)$.
    \eleml

    \bpf
    Note that any $F'\in K[\bfy,\bfr]$ can be decomposed into
    weighted homogeneous components, i.e.,
    $F'=\sum_{i}F_i'$
    where $F_i'$ is weighted homogeneous.
    Let $R'$ and $R_i'$ be the normal forms of $F'$ and $F_i'$
    relative to $\calG$ respectively. Then
    $R'=\sum_{i}R_i'$.
    Let $R$ be the normal form of $F$ relative to $\calG$.
    Then there exists $i$ such that $R_i'=R$ and $R_j'=0$ if $j\ne i$.
    Therefore,
    \[\wdeg(F')\ge\wdeg(F_i')=\wdeg(R_i')=\wdeg(R)=\wdeg(F).\]
    The lemma is proved.
    \epf

    \bthml{minimalgist}
    If $R\in K[\bfy]$ is the normal form of $F\in K[\bfr]$
    relative to $\calG_\calD$ where $\calD$ is weighted homogeneous
    and $\calG_\calD$ is the Gr\"obner basis of
    \[\calI_\calD=\bang{y_1-d_1\dd y_\ell-d_\ell},\]
    then $R$ is a minimal $\calD$-gist of $F$.
    \ethml

    \bpf
    First by Theorem \ref{thm:normalform}, $R$ is a $\calD$-gist of $F$.
    If $\calD$ is weighted homogeneous, so is $\calG_\calD$.
    By Lemma \ref{lem:minimal_degree}, $R$ is $\calG$-minimal.
    Since $I_{\calG}=I_\calD$, $R$ is $\calD$-minimal by definition.
    \epf

    Theorems \ref{thm:normalform} and \ref{thm:minimalgist}
    lead to the following algorithm after specializing
    $y_i-d_i$ to $z_i-\ole_i$ and $\omega$ to
    \[
    \omega(z_i)=i,\quad\omega(r_i)=1.
    \]
    }

    \begin{figure}[h]
    \Ldent
    \progb{\\
	    $\ggist(F,\bfmu)$:
	    \lline[0] \INPUt:~~ $F\in K^\delta[\bfr]$ and
	    	$\bfmu=(\mu_1\dd\mu_m)$.
	    \lline[0] \OUTPUt: a minimal $\bfmu$-gist of $F$
	    		or say ``$\ooF$ does not exist"
	    \lline[8] $\calB\ass \left\{z_1-\ole_1(\bfr)\dd
                    z_n-\ole_n(\bfr)\right\}$
	    \lline[8] $ord\ass plex(r_1\dd r_m,z_1\dd z_n)$
	    \lline[8] $\calG\ass GroebnerBasis(\calB, ord)$
	    \lline[8] $R\ass NormalForm(F,\calG,ord)$
	    \lline[8] If $\deg(R, \bfr)>0$ then
	    \lline[16] Return ``$\ooF$ does not exist''
	    \lline[8] Else
	    \lline[16] Return $R$
		\Ldent
	}
    \caption{The \ggist\ algorithm.}
    \label{fig:ggist}
    \end{figure}

    \begin{xample}\label{ex:GBmethod}
    We carry out the algorithm $\ggist$ for $F=3r_1^2+r_2^2+2r_1r_2$
    and $\bfmu=(2,1)$ as follows.
    \benum[Step 1]
    \item Construct $\calB=
        \{z_1-(2r_1+r_2),z_2-(r_1^2+2r_1r_2),z_3-r_1^2r_2\}$.
    \item Compute the Gr\"{o}bner basis of $\calB$ with the lexicographical order
            $z_1\prec z_2\prec z_3\prec r_1\prec r_2$ to get
        \begin{align*}
        \calG&=\{4z_1^3z_3-z_1^2z_2^2
                        -18z_1z_2z_3+4z_2^3+27z_3^2,
                        2r_1z_2^3+4z_1^2z_2z_3-z_1z_2^3
                        -54r_1z_3^2 \\
                    &~~+36z_1z_3^2-15z_2^2z_3, 6r_1z_1z_3-2r_1z_2^2-4z_1^2z_3
                        +z_1z_2^2+3z_2z_3,
                        r_1z_1z_2-9r_1z_3\\
                    &~~+6z_1z_3-2z_2^2,2r_1z_1^2-6r_1z_2-z_1z_2+9z_3,
                       3r_1^2-2r_1z_1+z_2,-z_1+2r_1+r_2\}.
        \end{align*}
    \item Compute the normal form of $F$ relative to $\calG$
            to get $R=z_1^2-z_2$.
    \item Since $\deg(R,\bfr)=0$, the algorithm outputs $R=z_1^2-z_2$.
    \eenum

    \end{xample}

\sect{Computing Gists via Preprocessing Approach}
\label{sec:Pesmethod}
    In the previous section, we show how to compute $\bfmu$-gists
    using Gr\"{o}bner bases. This algorithm is quite slow
    when $\bfmu\neq(1,1\dd 1)$
    \revised{(see Table \ref{tab:experiment1}, Example F3)}.
    In the next two sections,
    we will introduce two methods
    based on an analysis of the following two $K$-vector spaces:
    \bitem
        \item  $K^\delta\sym[\bfx]$: the set of symmetric homogeneous
                polynomials of degree $\delta$ in $K[\bfx]$
        \item  $K^\delta_\bfmu[\bfr]$: the set of
                $\bfmu$-symmetric polynomials of degree $\delta$ in $K[\bfr]$
    \eitem
    The first method is based on preprocessing and reduction:
    we first compute a basis for $K^\delta_\bfmu[\bfr]$,
    and then use the basis to reduce $F(\bfr)$.
    The second method directly computes the $\bfmu$-gist
    of $F(\bfr)$ by solving linear equations.

\ssect{Structure of a $\bfmu$-Symmetric Polynomial Set}
	\!\!\!\!We first consider \!$K^\delta\sym[\bfx]$, the
	symmetric homogeneous polynomials of degree $\delta$.
	This is a $K$-vector space.
    By a \dt{weak partition} of an integer $k$, we mean
    \revised{
    a sequence $\bfalpha=(\alpha_1,\alpha_2\dd \alpha_k)$
    where $\sum_{i=1}^k\alpha_i=k$
    and $\alpha_1\ge \alpha_2\ge\cdots\ge\alpha_k\ge 0$.
	Thus, in contrast to an ordinary partition,
    a weak partition allows zero parts.
    Given $\delta$,
    if $\bfalpha$ is a weak partition of $\delta$
    }
    and no part $\alpha_i$ larger than $n$,
    we will write
		$$\bfalpha\vdash (\delta,n).$$
	Let
		$$\e_\bfalpha \as \prod_{i=1}^\delta \e_{\alpha_i}$$
	For instance if $\delta=4, n=2, \bfalpha=(2,1,1,0)$ then
	$\e_\bfalpha=\e_2\e_1\e_1\e_0 = \e_2\e_1^2$.
	
	Let $T(\bfx)$ denote the set of terms of $\bfx$,
	and $T^\delta(\bfx)$ denote those terms
	of degree $\delta$.  A typical element of $T^\delta(\bfx)$ is
	$\prod_{i=1}^n x_i^{d_i}$ where $d_1+\cdots+d_n=\delta$.
	We totally order the terms in $T^\delta(\bfx)$
	using the lexicographic ordering in which
		$x_1 \prec x_2\prec \cdots\prec x_n$.
	Given any $F\in K(\bfx)$, its \dt{support} is
	$\Supp(F)\ib T(\bfx)$ such that $F$ can be uniquely
	written as
		\beql{F} F=\sum_{p\in \Supp(F)} c(p)p\eeql
	where
	$c:\Supp(F)\to K\setminus\set{0}$ denote the coefficients of $F$.
	Let the \dt{leading term} $\lt(F)$ be
	equal to the $p\in \Supp(F)$ which is the largest
	under the lexicographic ordering.  For instance,
	$\Supp(\e_1)\,=\,\set{x_1\dd x_n}$ and $\lt(\e_1)\,=\,x_n$.  Also
	$\Supp(\e_1\e_2)
		\,=\,\{x_i x_j x_k: 1\le i\neq j\le n,$ $1\leq k\leq n\}$
	and $\lt(\e_1\e_2)= x_n^2x_{n-1}$.
	The coefficient of $\lt(F)$ in $F$ is
	the \dt{leading coefficient} of $F$, denoted by $\lco(F)$.
	Call $\lm(F)\as \lco(F)\lt(F)$
	the \dt{leading monomial} of $F$.
    This is well-known:
	
	\bprol{basisofV1}
	The set $\calB_1 \as \set{\e_\bfalpha: \bfalpha\vdash (\delta,n)}$
	is a $K$-basis for the vector space $K^\delta\sym[\bfx]$.
	\eprol

    \begin{xample}\label{eg:V1space}
    Let $n=4$ and $\delta=3$.
	Then
		$\calB_1 = \set{\e_1^3,\e_1\e_2, \e_3}$
	forms a basis of the $K$-vector space $K^\delta\sym[\bfx]$.
    \end{xample}

    \ignore{%
	\bpf
	Let $F\in K^\delta\sym[\bfx]$ be non-zero, written
	in the standard form \refeq{F} above.
	Let $p_1=\lt(F)=\bfx_{\beta}$ where $\beta\vdash (\delta)_n$
    and $(\delta)_n$ denotes a $n$-partition of $\delta$
    allowing the appearance of zero. The symmetry of $F$ implies that
    $p_1$ is of the form $x_1^{\beta(1)}\cdots x_n^{\beta(n)}$
    where $\beta(1)\geq\cdots\geq\beta(n)\geq0$.
    We first prove that $p_1\in\lt(\calB_1)$.

    Rewrite $p_1$ into the following form
    \begin{align*}
    p_1&=x_1^{\beta(1)}\cdots x_n^{\beta(n)}\\
       &=(x_1\cdots x_n)^{\beta(n)}\cdot
        (x_1\cdots x_{n-1})^{\beta(n-1)-\beta(n)}
        \cdots
        (x_1x_2)^{\beta(2)-\beta(3)}\cdot
        x_1^{\beta(1)-\beta(2)}\\
       &=\lt\left(\e_n^{\beta(n)}\right)\cdot
        \lt\left(\e_{n-1}^{\beta(n-1)-\beta(n)}\right)\cdots
        \lt\left(\e_2^{\beta(2)-\beta(3)}\right)\cdot
        \lt\left(\e_1^{\beta(1)-\beta(2)}\right)\cdot
        \lt\left(\e_0^{\beta(0)-\beta(1)}\right)\cdot
        \\
       &=\lt\left(\prod_{i=0}^n\e_i^{\beta(i)-\beta(i+1)}\right).
    \end{align*}
    where $\beta(0)\as\delta$ and $\beta(n+1)\as 0$.
    Construct
    \[
    \alpha=(\underbrace{n\dd n}_{\beta(n)-\beta(n+1)},
            \underbrace{n-1\dd n-1}_{\beta(n-1)-\beta(n)}\dd
            \underbrace{2\dd 2}_{\beta(2)-\beta(3)},
            \underbrace{1\dd 1}_{\beta(1)-\beta(2)},
            \underbrace{0\dd 0}_{\beta(0)-\beta(1)}).
    \]
    Then the length of $\alpha$ is equal to $\delta$ because
    \[
    \sum_{i=0}^n(\beta(i)-\beta(i+1))=\beta(0)-\beta(n+1)=\delta.
    \]
    The summation of elements in $\alpha$ is also equal to $\delta$
    because
    \begin{align*}
    \sum_{i=0}^ni(\beta(i)-\beta(i+1))
            &=\beta(n)+(n-1)\beta(n)+\sum_{i=1}^{n-1}i(\beta(i)-\beta(i+1))\\
	    &=\beta(n)+(n-1)\beta(n)+(n-1)(\beta(n-1)-\beta(n))+\sum_{i=1}^{n-2}i(\beta(i)-\beta(i+1))\\
            &=\beta(n)+(n-1)\beta(n-1)+\sum_{i=1}^{n-2}i(\beta(i)-\beta(i+1))\\
            &=\beta(n)+\beta(n-1)+(n-2)\beta(n-2)+\sum_{i=1}^{n-3}i(\beta(i)-\beta(i+1))\\
            &=\cdots=\beta(n)+\beta(n-1)+\cdots+\beta(1)=\delta.
    \end{align*}
    Thus $\alpha$ is a partition of $\delta$ with the length $\delta$
    such that $p_1=\lt(\e_{\alpha})$.

    Setting $G_1\as\e_{\alpha}$
    and $F_1 \as F - c(p_1)G_1$, we see $F_1$ is also symmetric.
    If $F_1$ is non-zero, then $\deg(F_1,\bfx)=\deg(F,\bfx)=\delta$.
    Continue to form $F_2 \as F_1 -  c(p_2)G_2$ for
	$p_2=\lt(F_1)$ and also $p_2=\lt(G_2)\in \lt(\calB_1)$.
	After $k\ge 1$ steps, we obtain
	$F_{k+1}=F_k- c(p_k)G_k = 0$ with empty support.  Then we conclude that
	$F= \sum_{i=1}^k c(p_i) G_k$.  This proves that $K^\delta\sym[\bfx]$
	is linearly generated by $\calB_1$.

    The linear independence of polynomials in $\calB_1$ is immediate from
	the fact that $\lt(G)\neq \lt(G')$ for $G,G'\in \calB_1$
    if $G\neq G'$.
	\epf
    }

    \ignore{%
    It is easy to see that $p_1=\lt(e_1)$
	for some $e_1\in B_1$.  Setting $f_1 \as f - c(p_1)e_1$, we see
	that $\Supp(f_1)=\Supp(f)\setminus \Supp(e_1)$.
	If $\Supp(f_1)$ is non-empty, we continue
	to form $f_2 \as f_1 -  c(p_2)e_2$ for
	$p_2=\lt(f_1)$ and also $p_2=\lt(e_2)\in B_1$.
	After $k\ge 1$ such steps, we obtain
	$f_{k+1}=f_k- c(p_k)e_k = 0$ with empty support.  Then we conclude that
	$f= \sum_{i=1}^k c(p_i) e_k$.  This proves that $K^\delta\sym[\bfx]$
	is linearly generated by $B_1$.
    }
	
	Now we consider the set $K^\delta_\bfmu[\bfr]$ comprising
	the $\bfmu$-symmetric functions of degree $\delta$.
	The map
		$$\sigma_\bfmu: K^\delta\sym[\bfx] \to K^\delta_\bfmu[\bfr]$$
	is an onto $K$-homomorphism.
    Note that $K^\delta_\bfmu[\bfr]$ is a vector space which is generated by
	the set
		$$\mathcal{\olB}_1\as \set{\ol{G}: G\in \calB_1}$$
	where $\ol{G}$ is a short hand for writing $\sigma_\bfmu(G)$.
	It follows that there is a maximal independent
	set $\calB_2\ib \mathcal{\olB}_1$
    that is a basis for $K^\delta_\bfmu[\bfr]$.
	The set $\calB_2$ may be a proper subset of $\mathcal{\olB}_1$,
    which is seen in this example:
	let $\bfmu=(2,2)$ and $\delta=3$.
    From Example \ref{eg:V1space}, we have
    $\calB_1=\set{\e_1^3, \e_1\e_2,\e_3}$.
    Then
	$$\mathcal{\olB}_1=\set{A:\ole_1^3, B:\ole_1\ole_2,
		C:\ole_3}.$$
	We can check that $\mathcal{\olB}_1$ is linearly dependent
	since $A+8C=4B$.  Furthermore, it is easy to verify that any
	$2$-subset of $\mathcal{\olB}_1$ forms a basis for $K^\delta_\bfmu[\bfr]$.
    In general, we have the following lemma.

    \bpro
	For all
	$\bfmu=(\mu_1,\mu_2)$, $\mathcal{\olB}_1=\{\ole_1^2, \ole_2\}$
	is a linearly independent set.
    \epro
	\bpf
	Assume there exist $k_1$ and $k_2$ such that
	\begin{equation}\label{eq:s1s2linearindependent}
	k_1\ole_1^2+k_2\ole_2=0.
	\end{equation}
	Let $\bfmu=(\mu_1 \dd \mu_m)$. Then
		\begin{equation}\label{eqs:s1s2expression}
        \ole_1=\sum_{i=1}^m{\mu_ir_i},\quad
		\ole_2=\sum_{i=1}^m{\mu_i \choose 2}
		    r_i^2+\sum_{i<j}{\mu_i\mu_jr_ir_j}
		\end{equation}
	The substitution of \eqref{eqs:s1s2expression}
    into \eqref{eq:s1s2linearindependent} leads to
		\[\sum_{i=1}^m\left[k_1\mu_i^2+k_2{\mu_i\choose 2}\right]
			r_i^2+(2k_1+k_2)\sum_{i<j}\mu_i\mu_jr_ir_j=0.\]
	Therefore,
		\[k_1\mu_i^2+k_2{\mu_i\choose 2}=(2k_1+k_2)\mu_i\mu_j=0,
			\quad\mbox{for}\quad i,j=1 \dd m
			\quad\mbox{where}\quad i<j.\]
	This system has a unique solution which is $k_1=k_2=0$.
    Thus it follows that $\ole_1^2$ and $\ole_2$ are
	linearly independent.
	\epf

    From the previous discussion, we saw that the dimension of
    $K^\delta_\bfmu[\bfr]$ may be smaller than that of $K^\delta\sym[\bfx]$.
    There are two special cases:
    when $\bfmu=(1,1\dd 1)$,
		$dim(K^\delta\sym[\bfx])=dim(K^\delta_\bfmu[\bfr])$;
    when $\bfmu=(n)$, $dim(K^\delta_\bfmu[\bfr])=1$.
    The following table shows the dimensions of
    	$K^\delta\sym[\bfx]$ and $K^\delta_\bfmu[\bfr]$
    for some cases. One can see that it is quite common
    to have a dimension drop from the specialization $\sigma_\bfmu$
    (these lower dimensions are underlined in the table).

	\bcen
	\begin{table}[hbt]
	{\tiny
	\begin{tabular}{l l | l c c||l l | l c c} \hline
	    $n$ & ~~~~$\bfmu$ & $\delta$ & $dim(K^\delta\sym[\bfx])$
	    	& $dim(K^\delta_\bfmu[\bfr])$ &
        $n$ & ~~~~$\bfmu$ & $\delta$ & $dim(K^\delta\sym[\bfx])$
	    	& $dim(K^\delta_\bfmu[\bfr])$
	    \\\hline\hline
	    $3$ & $(2,1)$       & $2$ & $2$ & $2$
	 &  $5$ & $(2,1,1,1)$   & $4$ & $5$ &  $5$\\
	        &               & $3$ & $3$ & $3$
	 &      &               & $5$ & $7$ &  $7$\\
	        &               & $4$ & $4$ & $4$
	 &      &               & $6$ & $10$ &  $10$\\\hline
	    $4$ & $(2,1,1)$     & $3$ & $3$ & $3$
	 &  $5$ & $(2,2,1)$     & $4$ & $5$ &  $5$\\
	        &               & $4$ & $5$ & $5$
	 &      &               & $5$ & $7$  &  $7$\\
	        &               & $5$ & $6$  & $6$
	 &      &               & $6$ & $10$ &  $10$\\\hline
	    $4$ & $(3,1)$       & $3$ & $3$  & $3$
	 &  $5$ & $(3,1,1)$     & $4$ & $5$  &  $5$\\
		&               & $4$ & $5$  & $\ul{4}$
	 &      &               & $5$ & $7$  &  $7$\\
	        &               & $5$ & $6$  & $\ul{5}$
	 &      &               & $6$ & $10$ &  $10$\\\hline
	    $4$ & $(2,2)$       & $3$ & $3$  & $\ul{2}$
	 &  $5$ & $(3,2)$       & $4$ & $5$  &  $\ul{4}$\\
		    &               & $4$ & $5$  & $\ul{3}$
	 &      &               & $5$ & $7$  &  $\ul{5}$\\
		    &               & $5$ & $6$  & $\ul{3}$
	 &      &               & $6$ & $10$ &  $\ul{6}$\\\hline
	        &               &     &      &
	 &  $5$ & $(4,1)$       & $4$ & $5$  &  $\ul{4}$\\
	        &               &     &      &
	 &      &               & $5$ & $7$  &  $\ul{5}$\\
	        &               &     &      &
	 &      &               & $6$ & $10$ &  $\ul{6}$\\\hline
		\end{tabular}
		}
		\caption{Dimensions of $K^\delta\sym[\bfx]$
			and $K^\delta_\bfmu[\bfr]$}
		\end{table}
		\ecen


\ssect{Reduction and Canonical Sequence}
    This subsection is devoted to generating the basis of
    the vector space $K^{\delta}_{\bfmu}[\bfr]$
    with which one could easily check
    whether a given polynomial is in this vector space
    or not. For this purpose, we introduce a reduction procedure
    and its applications. This yields a more efficient method
    to check for $\bfmu$-symmetry
    \revised{and to compute the gists in the affirmative case}.

	A set $\calB\ib K[\bfr]$ is \dt{linearly independent}
	if any non-trivial $K$-linear combination over $\calB$ is non-zero;
	otherwise, $\calB$ is \dt{linearly dependent}.
	We say $\calC=(C_1\dd C_\ell)$ is a \dt{canonical sequence}
    if the set $\set{C_1\dd C_\ell}$ is linearly independent
    and $\lt(C_i)\prec\lt(C_j)$ for all $i<j$.
    \revised{
    In this subsection, we work in the vector space $K^{\delta}[\bfr]$
    of all homogeneous polynomials of degree $\delta$ in $K[\bfr]$.
    }

	We will introduce the concept of reduction.
	As motivation, first express any non-zero polynomial $G$ as
	$G=\lm(G)+R$ where $R$ is the tail of $G$
	(i.e., remaining terms of $G$).
	In the terminology of term rewriting systems (e.g.,
	\cite{sculthorpe+2:rewrite:14} and \cite[Section 12.3.4]{yap:algebra:bk}),
	we then view $G$ as a rule for rewriting an arbitrary polynomial
	$F$ in which any occurrence of $\lt(G)$ in $\Supp(F)$ is removed
	by an operation of the form
	$F' \ass F- c\cdot G$, with $c\in K$ chosen
    to eliminate $\lt(G)$ from $\Supp(F')$.
    For instance, consider
		$F= \ul{r_2^2} + 2r_1r_2 - r_1^2$
	and
		$G=\ul{r_1r_2} + r_1^2 - r_2$
	where we have underlined the leading monomials
	of $F$ and $G$. Here we use the above convention that $r_1\prec r_2$.
	Then
		$F' = F-2G = \ul{r_2^2} - 3r_1^2 + 2r_2$.
	We say that $F$ has been reduced by $G$ to $F'=F-2G$.
	The $\Supp(F')$ no longer has $r_1r_2$, but has gained other
	terms which are smaller in the $\prec$-ordering.

    If $\lt(G) \notin \Supp(F)$,
    we say $F$ is \emph{reduced} relative to $G$.
    For a sequence $\calC$, if $F$ is reduced with relative to each
    $G\in \calC$, we say $F$ is \emph{reduced} relative to $\calC$.
    Then we have this basic property:

    \bprol{linear_independence}
    Let $F\neq 0$ and $\calC=(C_1\dd C_\ell)$ be a canonical sequence.
    If $F$ is reduced relative to $\calC$,
    then $\set{F,C_1\dd C_\ell}$ is linearly independent.
    \eprol

    \bpf
    By way of contradiction, assume $F$ is linearly dependent
    on $\calC$, say  $F=\sum_{i=1}^\ell k_iC_i$.
    This implies $\lt(F)=\lt(\sum_{i=1}^\ell k_iC_i)\preceq\lt(C_\ell)$.
    So there is a smallest $j\le\ell$
    such that $\lt(F)\preceq\lt(C_j)$.  Since $F$ is reduced
    relative to $\calC$, we have $\lt(F)\prec\lt(C_j)$.
    It is easy to see that this implies $k_j,k_{j+1}\dd k_\ell$ are all zero.
    It follows that $j\ge 2$ (otherwise $F=\sum_{i=1}^\ell k_iC_i=0$).
    Moreover, we have
    $\lt\Big(\sum_{i=1}^\ell k_iC_i\Big)\preceq \lt(C_{j-1})\prec\lt(F)$.
    This contradicts the assumption $\sum_{i=1}^\ell k_iC_i =F$.
    \epf

    We next introduce the \reduce\ subroutine in
    Figure \ref{fig:reduce}
    which takes
    an arbitrary polynomial $F\in K[\bfr]$
    and a canonical sequence $\calC$
    as input to produce a reduced polynomial
    relative to $\calC$.

    \begin{figure}[h]
        \Ldent
        \progb{\\
            $\reduce(F,\calC)$:
            \lline[0] \INPUt:~~ $F\in K[\bfr]$,
                 $\calC=(C_1\dd C_\ell)$ is canonical
                    and each $C_i\in K^\delta[\bfr]$
            \lline[0] \OUTPUt: $R$ such that
                $F=\sum_{i=1}^\ell c_iC_i+R$ with $c_i\in K$ and
            \lline[14] $R$ is reduced relative to $\calC$.
            \lline[8] Let $R\ass 0$, $i\ass \ell$
            \lline[8] While ($F\neq0$ and $i>0$)
            \lline[16]  $p\ass \lt(F)$
            \lline[16]  If $p\succ\lt(C_i)$ then
            \lline[24]      $R\ass R+\lco(F)\cdot p$; $F\ass F-\lco(F)\cdot p$
            \lline[16]  else
            \lline[24]          If $p=\lt(C_i)$ then
            \lline[32]      	    $F\ass F-\frac{\lco(F)}{\lco(C_i)}C_i$
            \lline[24]          $i\ass i-1$
            \lline[8]  Return $R+F$
			\Ldent
        }
        \caption{The \reduce\ algorithm.}
        \label{fig:reduce}
    \end{figure}


    \begin{xample}\label{ex:reduce}
    Consider $F=3r_1^2+4r_1r_2+r_2^2$ and $\bfmu=(2,1)$.
    Given a canonical sequence
    $\calC=(r_1^2+2r_1r_2, 2r_1^2+r_2^2)$
    with $r_1\prec r_2$,
    we proceed to compute the reduced polynomial of $F$
    relative to $\calC$
    using the above $\reduce$ algorithm.
    \benum[Step 1]
    \item \emph{Initialization}. Let $R=0$ and $i=2$.
    \item \emph{First iteration}. For $F\neq0$ and $i>0$, $p=\lt(F)=r_2^2$
        which is equal to $\lt(C_2)$. Thus $F$ is updated with
        $F-\frac{\lco(F)}{\lco(C_2)}C_2=r_1^2+4r_1r_2$
        and $i$ is updated with $i-1=1$.
    \item \emph{Second iteration}. For $F\neq0$ and $i>0$, $p=\lt(F)=r_1r_2$
        which is equal to $\lt(C_1)$. Thus $F$ is updated with
        $F-\frac{\lco(F)}{\lco(C_1)}C_1=-r_1^2$
        and $i$ is updated with $i-1=0$.
    \item \emph{Finalization}.  Since $i=0$, the iteration stops
        and the algorithm outputs $R+F=-r_1^2$.
    \eenum

    \end{xample}

    \ignore{%
    The next two propositions imply the
    correctness of the $\reduce$ subroutine.
    }%

    \bpro
    The algorithm $\reduce(F,\calC)$ halts and takes at most
    $\#\Supp(F)-1+\sum_{i=1}^\ell\#\Supp(C_i)$ loops.
    Moreover, this bound is tight in the worst case.
    \epro

    \bpf
    Let $F_1$ denote the input polynomial.
    The variable $F$ in the algorithm is initially equal to $F_1$.
    In general, let $F_j$ ($j=1,2,\ldots$)
    be the polynomial denoted by $F$ at the beginning of the
    $j$th iteration of the while-loop.  Thus $p_j=\lt(F_j)$
    is the term denoted by the variable $p$ in the $j$th iteration.
    Note that $F_j$ transforms to $F_{j+1}$ by losing its
    leading term $p_j$ or furthermore, if $i(j)$ is the current value
    of the variable $i$, and
    $p_j=\lt(C_{i(j)})$ where $C_{i(j)}\in\calC$,
    we also subtract the
    tail of $\frac{\lco(F_j)}{\lco(C_{i(j)})}\cdot C_{i(j)}$
    from $F_{j+1}$.
    Thus,
    $\Supp(F)\ib \Supp(F_1)\cup \Supp(\calC)$.
    Since $p_1\succ p_2\succ \cdots$ and
    $p_j\in \Supp(F_1)\cup \Supp(\calC)$, this proves that the
    algorithm halts after at most $\#\Supp(F_1)+\#\Supp(\calC)$ iterations.

    Let $L$ be the actual number of iterations.
    We now give a refined argument to show that
    $L\le \#\Supp(F)-1+\#\Supp(\calC)$, i.e., we can improve the
    previous upper bound on $L$ by one.
    Note that we exit the while-loop when $F=0$ or $i=0$ holds.
    There are two cases.

    CASE 1: $F=0$ \emph{and} $i=0$ both hold.
    This implies that in the previous
    iteration, $p_L = \lt(C_1)$, and $i$ was decremented from $1$ to $0$.
    Since $p_L$ came from $\#\Supp(F_1)$ or $\#\Supp(C_2\dd C_\ell)$,
    this implies
	$$L\le \#(\Supp(F_1)\cup \Supp(\calC))
		\le \#\Supp(F_1)-1+\#\Supp(\calC).$$

    CASE 2: $F\neq 0$ or $i>0$.  Each iteration can be ``charged''
    	to an element of $\#(\Supp(F_1)\cup \Supp(\calC))$.
	If $i>0$, then some elements in $\Supp(C_1)$ are not charged.
	If $F\neq 0$, then $\Supp(F)\ib \Supp(F_1)\cup\Supp(\calC)$
	also implies that some elements of
		$\Supp(F_1)\cup \Supp(\calC)$ are not charged.
	Thus CASE 2 implies
		$$L\le \#\Supp(F_1)-1+\#\Supp(\calC).$$
	This proves our claimed upper bound on $L$.

    To prove that this bound is tight,
    let $F_1=p_1+q_1+\cdots+q_s$ and $\calC=(p_1\dd p_\ell)$
    with the term ordering
    $p_1\prec\cdots\prec p_\ell\prec q_1\prec\cdots \prec q_s$.
    In the first $s$ loops,
    since $\lt(F_1)\succ p_\ell$,
    $i$ is unchanged and
    $q_1\dd q_s$ are removed from $F$.
    In the next $\ell-1$ loops,
    since $\lt(F_1)=p_1\prec p_2\prec\cdots\prec p_\ell$,
    $F$ is unchanged and $i$ will drop to $1$.
    In the last loop,
    since $\lt(F_1)=p_1=\lt(C_1)$,
    $F$ will be reduced relative to $C_1$ to $0$.
    So the total number of loops is
    $s+\ell=\#\Supp(F_1)-1+\sum_{i=1}^\ell \#\Supp(C_i)$.
    \epf

	\bpro
    (Correctness) The \reduce\ subroutine is correct.
    \epro

    \bpf
    Correctness of the output $R_*$ in the \reduce\ subroutine
    amounts to two assertions.
    \\ (A1) The output $R_*$ is reduced relative to $\calC$.
    \\ (A2) $F_1-R_*$ is a linear combination of the polynomials in $\calC$
    	where $F_1$ is the input polynomial.
    \\ To prove these assertions,
    assume that the while-loop terminates after the $L$-th iteration.
    Also let $F_j$, $R_j$ and $i_j$ denote
    the values of the variables $F$, $R$ and $i$ at the start of
    the $j$th iteration (for $j=1\dd L,L+1$).
    Thus, $F_1$ is the input polynomial, $R_1=0$ and $i_1=\ell$.
    Assertion (A2) follows from the fact that in
    each iteration, the value of $F+R$ does not change
    or it changes by a scalar multiple of some $C_i\in\calC$.
    To see Assertion (A1), we use induction
    on $j$ to conclude that $F_j$ is reduced with respect
    to $\calC_j \as (C_{1+i_j}, C_{2+i_j}\dd C_{\ell})$,
    and $R_j$ is reduced with respect to $\calC$.
    Finally, the output $R_*$ is equal to $R_{L+1}+F_{L+1}$,
    At termination, there are two cases: either $F_{L+1}=0$
    (so $R_*=R_{L+1}$)
    or $i_{L+1}=0$ (so $R_*=R_{L+1}+F_{L+1}$).
    In the first case,
    Assertion (A1) holds because
    $R_*=R_{L+1}$ and $R_{L+1}$ is reduced w.r.t.~$\calC$.
    In the second case, Assertion (A1) holds because
    $F_{L+1}$ is reduced w.r.t. $\calC_{L+1}=\calC$.
    \epf

    \bpro
	If $\calC=(C_1\dd C_\ell)$ is canonical,
    then $\reduce(F,\calC)\!=\!0$ iff $\set{F,C_1\dd C_\ell}$
    is linearly dependent.
	\epro
	\bpf
	One direction is immediate:
	$\reduce(F,\calC)=0$ implies that $F$ is a linear combination
	of the elements of $\calC$.
    Conversely, if $\reduce(F,\calC)=F'\neq 0$,
    then $\set{F',C_1\dd C_\ell}$ is linearly independent
    by \refPro{linear_independence}.
    Moreover, $F'=F-\sum_{i=1}^\ell k'_iC_i$ for some $k'_1\dd k'_\ell$.
    By way of contradiction, assume that $\set{F,,C_1\dd C_\ell}$
    is linearly dependent, i.e.,
    $F=\sum_{i=1}^\ell k_iC_i$ for some $k_1\dd, k_\ell$.
    It follow that $F'=\sum_{i=1}^\ell (k_i-k'_i)C_i$,
    contradicting the linear independence of $\set{F',C_1\dd C_\ell}$.
	\epf
	
	This gives rise to the $\canonize$ algorithm in Figure
	\ref{fig:canonize} to construct a canonical sequence.

    \begin{figure}[h]
    \Ldent
    \progb{\\
            $\canonize(\calB)$:
	    \lline[-4] \INPUt:~~ $\calB=(B_1\dd B_\ell)$ where \revised{$B_i\in K^\delta[\bfr]$}.
            \lline[-4] \OUTPUt: a canonical $\calC$ whose linear span
                                satisfies span$(\calB)=$span$(\calC)$
            \lline[0] Let $\calC\ass ()$ (empty sequence)
            \lline[0] For $i=1$ to $\ell$
            \lline[8]  $B \ass \reduce(B_i,\calC)$
            \lline[8]  If $B\neq 0$ then
            \lline[16]  $\calC\ass \Insert(B,\calC)$
            \lline[0] Return $\calC$
			\Ldent
        }
        \caption{The $\canonize$ algorithm.}
        \label{fig:canonize}
    \end{figure}

    We view the sequence $\calC=(C_1\dd C_m)$
    as a sorted list of polynomials,
    with $\lt(C_i)\prec\lt(C_{i+1})$.  Thus
    $\Insert(B,\calC)$ which inserts $B$ into $\calC$,
    can be implemented in $O(\log m)$ time
    with suitable data structures.  The overall complexity is
    $O(\ell+ m\log m)$ where $m$ is the length of the output $\calC$.
    Alternatively, we could initialize
    the input $\calB$ as a priority queue can pop the
    polynomial $B\in \calB$ with the largest $\lt(B)$.  This design yields
    a complexity of $O(\ell\log \ell)$ which is inferior when $\ell\gg m$.

    \begin{xample}\label{ex:canonize}
    Consider a polynomial set $\calB=\set{4r_1^2+4r_1r_2+r_2^2,r_1^2+2r_1r_2}$.
    We proceed to compute a canonical sequence from $\calB$ relative to
	$r_1\prec r_2$ using the $\canonize$ algorithm.
    \benum[Step 1]
    \item \emph{Initialization}. Let $\mathcal{C}=()$.
    \item \emph{First iteration}. Let $B=r_1^2+2r_1r_2$.
        Note that $\mathcal{C}=()$. Thus $B'=\reduce(B, \mathcal{C})$ $=B$
        and $\mathcal{C}$ is updated with $(r_1^2+2r_1r_2)$.
    \item \emph{Second iteration}. Let $B=4r_1^2+4r_1r_2+r_2^2$.
        Then carry out the reduction of $B$ relative to $\mathcal{C}$ and we get
        $B'=\reduce(B, \mathcal{C})=2r_1^2+r_2^2$.
        After inserting $B'$ into $\mathcal{C}$,
        $\mathcal{C}$ is updated with $(r_1^2+2r_1r_2, 2r_1^2+r_2^2)$.
    \item \emph{Finalization}.  Now the iteration stops
        and the algorithm outputs $\mathcal{C}=(r_1^2+2r_1r_2, 2r_1^2+r_2^2)$.
    \eenum

    \end{xample}

    \ignore{%
	Chee to Jing: I do not see why canonize is correct.  Don't
	you need to assume that "$pop(\calB)$" removes the polynomial
	in $\calB$ with the largest $\lt(B_i)$???
	So the first thing you do is to construct a priority queue $Q$
	containing $\calB$?
    }
	
    The termination of $\canonize(\calB)$ is immediate from
    the termination of $\reduce(F,$ $\calC)$.
	The correctness of the output of $\canonize(\calB)$ comes from two
	facts:
    the returned $\mathcal{C}$ is clearly canonical.
    It is also maximal because any element $B\in \calB$
    that does not contribute to $\mathcal{C}$
    is clearly dependent on $\mathcal{C}$.

    It should be pointed out that by tracking the ``quotients" of $F$
    relative to $\calC$ in the \reduce\ algorithm
    and integrating the information into the \canonize\ algorithm,
    we can derive the relationship between
    $\calB=\{\ole_{\bfalpha}: \bfalpha\vdash(\delta,n)\}$
    and $\calC=\canonize(\calB)$ and write polynomials in $\calC$
    as linear combinations
    of polynomials in $\calB$.
    By ``quotients'', we mean the coefficients $c_i$'s in the expression
    $F=\sum_{i=1}^\ell c_iC_i+R$.
    \revised{
    When the quotient information is required,
    we use algorithms $\reduce(F,\calC,`q`)$ and $\canonize(\calB,`Q`)$
    where $q$ and $Q$ represents the quotient (column) vector and quotient matrix, respectively.
    More explicitly,
    \[F=\calC\cdot q+R\quad\mbox{and}\quad\calC=\calB\cdot Q\]
    where $\calB$ and $\calC$ are viewed as row vectors.
    These notations will be used in the \crgist\ algorithm
    in Figure \ref{fig:crgist} of the following subsection.
    }

\ignore{
	\blem \label{lem:termination-reduce}
	   $\reduce(f,B)$ terminates in no more than $\#B$ steps.
	\elem

	\bpf
	Without losing generality, we assume that polynomials in $B$
	are ordered by the leading terms. Let $k=\#B$.
	Then $B=\{b_1,\dd,b_k\}$ where $\lt(b_i)\prec \lt(b_{i+1})$.
	We prove the lemma by inductions for $k$.

	When $k=1$, the termination is immediate. Assume that
	the termination is true for $k>1$. In what follows,
	we prove that it is also true for $k+1$.

	\begin{itemize}
	\item Case 1: $\lt(f)\neq\lt(b_1)$.
	Choose $b \in B\setminus\{b_1\}$ such that $\lt(b)=\lt(f)$.
	Then the result of the subtraction step $f - \frac{\lco(f)}{\lco(b)} b$
		does not have
	any new term which precedes $\lt(b_2)$, including $\lt(b_1)$.
	Thus $b_1$ will never be used in $reduce(f,B)$.
	In other words, $reduce(f,B)=reduce(f, B\setminus\{b_1\})$
	and the latter terminates which is guaranteed by the induction.

	\item Case 2: $\lt(f)=\lt(b_1)$. In the first step of $reduce(f,
	    B)$ , we choose $b=b_1$,
	then $\lt(f - \frac{\lco(f)}{\lco(b)} b)\neq \lt(b_1)$;
	thus the remaining part of $reduce(f, B)$ is $reduce(f -
		\frac{\lco(f)}{\lco(b)} b, B\setminus\{b_1\})$,
	which always terminates by induction.
	\end{itemize}
	\epf
}

\ssect{Computing $\mathbf\mu$-gist via reduction}
    In this subsection, we use $\reduce$ and $\canonize$
    algorithms to construct the $\crgist$ algorithm
    for computing the $\bfmu$-gist of a polynomial.

    \begin{figure}[h]
    \centering
		\Ldent\progb{\lline[-8]
            		$\crgist(F,\bfmu)$:
            \lline[-4] \INPUt:~~
				$F\in K^\delta[\bfr]$, $\bfmu=(\mu_1\dd\mu_m)$
            \lline[-4] \OUTPUt: the $\bfmu$-gist of $F$ if $F$ is
	    	$\bfmu$-symmetric; otherwise
	    \lline[15] return ``$F$ is not $\bfmu$-symmetric".
            \lline[0] $\delta\ass\deg(F,\bfr)$
            \lline[0] $n\ass\sum_{i=1}^m\mu_i$
            \lline[0] $\calB\ass(\ole_{\bfalpha}: \bfalpha\vdash(\delta,n))$
            \lline[0] $Z\ass(z_{\bfalpha}: \bfalpha\vdash(\delta,n))$
            \lline[0] $\mathcal{C},Q\ass \canonize(\calB,`Q`)$
            \lline[0] $R,q\ass\reduce(F,\mathcal{C},`q`)$
            \lline[0] If $R=0$ then
            \lline[8]  Return $Z\cdot Q\cdot q$
            \lline[0] Return \emph{``$F$ is not $\bfmu$-symmetric''}
			\Ldent
        }
        \caption{The $\crgist$ algorithm.}
        \label{fig:crgist}
    \end{figure}
    \begin{xample}\label{ex:checkmusymmetry}
    Consider the polynomial $F=3r_1^2+4r_1r_2+r_2^2$ and $\bfmu=(2,1)$
    as in Example \ref{ex:GBmethod}. In what follows,
    we check whether $F$ is $\bfmu$-symmetric or not and compute its $\bfmu$-gist
    in the affirmative case.

    \benum [Step 1]
    \item Let $\delta=\deg(F,\bfr)=2$ and $n=\sum_{i=1}^m\mu_i=3$.
    \item From $\delta$ and $n$, construct $\calB=\set{(2r_1+r_2)^2,r_1^2+2r_1r_2}$ and $Z=(z_1^2,z_2)$.
    \item Compute a canonical $\mathcal{C}$ from $\calB$ and its quotient $Q$ relative to $\calB$. Then we get
        $\mathcal{C}=\canonize(\calB)$ $=(r_1^2+2r_1r_2, 2r_1^2+r_2^2)$ and
        $Q=\left(
        \begin{array}{cc}
        0&1\\
        1&-2
        \end{array}\right)$.
        The detailed computation can be found in Example \ref{ex:canonize}.
    \item Compute $R=\reduce(F,\mathcal{C})$ and the quotient $q$. By the result of Example
	\ref{ex:reduce},
        $R=-r_1^2\neq0$ and $q=(2,1)^T$. Thus the output is ``No",
        which means that $F$ is not $\bfmu$-symmetric.
    \eenum

    If we replace $F$ with $F=3r_1^2+2r_1r_2+r_2^2$, then after carrying out
    the same procedure as above, we will get $R=0$ and $q=(1,1)$,
    which means $F$ is $\bfmu$-symmetric and its $\bfmu$-gist is
    \[\ooF=(z_1^2,z_2)\cdot Q\cdot q^T=z_1^2-z_2.\]
    \end{xample}

    Since termination of the algorithm $\crgist$
    is immediate from that of $\canonize$ and $\reduce$,
    we only show its correctness. Assume $\deg(F,\bfr)=\delta$.
    Recall that $F\in K[\bfr]$ is $\bfmu$-symmetric
    iff there exists a homogeneous symmetric
    polynomial $\whF\in K[\bfx]$ of degree $\delta$
    such that $\sigma_{\bfmu}(\whF)=F(\bfr)$.
    By \refPro{basisofV1}, $\whF$ is symmetric
    and with degree $\delta$ iff $\whF\in K^\delta\sym[\bfx]$.
    Thus $F=\sigma_{\bfmu}(\whF)\in K^\delta_\bfmu[\bfr]$
    where $K^\delta_\bfmu[\bfr]$ is a $K$-vector space
    with the basis generated by
    $\calB=\{\ole_\bfalpha: \bfalpha\vdash(\delta,n)\}$.
    If $\calC= \canonize(\calB)$,
    then $\calC$ is the basis we want to obtain.
    Therefore, if $F$ is $\bfmu$-symmetric iff $\reduce(F,\calC)=0$.
    \revised{
    When $F$ is $\bfmu$-symmetric, $F=\calC\cdot q=\calB\cdot Q\cdot q$.
    By the definition of $\bfmu$-gist,
    $\ooF=(z_{\bfalpha}: \bfalpha\vdash(\delta,n))\cdot Q\cdot q$.
    }

\ssect{Exponential lower bound for nondeterministic reduction}
	In this subsection, we consider an alternative reduction process
	where each reduction step is non-deterministic.
    We prove that this version can be exponential in the worst case.
	
	For any term $p$, let $\coef(F,p)$ denote
	the coefficient of $p$ in $F$.  If $p\notin\Supp(F)$,
	then $\coef(F,p)=0$.  For any polynomial $C$, define
		$$\reduceStep(F,C)\ass
			F-\frac{\coef(F,\lt(C))}{\lco(C)} C.$$
	We call $\reduceStep(F,C)$ a \dt{$C$-reduction step}
    or a $\calC$-reduction step in case $C\in \calC$.
    We see that $\reduceStep(F,C)=F$ iff $\lt(C)$ does not occur in $F$.
    We say the reduction is improper in this case.

	Let $\nreduce(F,\calC)$ denote the subroutine
    that repeatedly transforms $F$ by applying proper
    $\calC$-reduction steps to $F$ until no more
	more change is possible. It returns the final value of $F$.
	We call this the \dt{nondeterministic reduction} of $F$.

	\bprol{nred_equal_red}
	For any linearly independent set $\calC$, we have  	
		$$\nreduce(F,\calC)=\reduce(F,\calC).$$
	Then $\nreduce(F,\calC)$ has $\leq 2^\ell$ $\calC$-reduction steps
	where $\ell=|\calC|$.  Moreover, $2^\ell$ steps may be needed.
	\eprol

	\bpf
    Let $R_1=\nreduce(F,\calC)$ and $R_2=\reduce(F,\calC)$.
    Then there exists $k_1\dd k_\ell$ and $k_1'\dd k_\ell'$ such that
    $$F=\sum_{i=1}^\ell k_iC_i+R_1=\sum_{i=1}^\ell k_i'C_i+R_2.$$
    It is immediate that
    	$$R_1-R_2=\sum_{i=1}^\ell (k_i-k_i')C_i.$$
    If $R_1\neq R_2$, there exists $i$ such that
    $k_i\neq k_i'$ and $k_j=k_j'~(j=1\dd i-1)$.
    Then $\lt(R_1-R_2)=\lt(C_i)$.
    This implies that $\lt(C_i)\in \Supp(R_1)$ or $\lt(C_i)\in\Supp(R_2)$.
    Hence $R_1$ or $R_2$ is not reduced relative to $\calC$.
    This contradicts with the output requirements of $\reduce$ or $\nreduce$.

    Let us define $a_\ell$ to be the longest $\calC$-derivation
    for any $\calC$ with $\ell$ elements.
    	CLAIM A: $a_\ell\le 2^\ell-1$.
	Let $\calC_\ell=(C_1\dd C_\ell)$ be any canonical sequence
	with $\ell$ elements.  Let
	\beql{f0}
	F_0\to F_1\to\cdots \to F_N
	\eeql
	be any $\calC_\ell$-derivation.  We must prove that $N\le 2^\ell-1$
	by induction of $\ell$.   Clearly, if $\ell=1$, then
	$a_1\le 1=2^1-1$.  Next, inductively
	assume that $a_{\ell-1}\le 2^{\ell-1}-1$.
	Suppose there does not exist an $i<N$ such that
	$F_i\to F_{i+1}$ is a $C_\ell$-reduction step.
	In that case, \refeq{f0} is a $\calC_{\ell-1}$-derivation.
	By induction hypothesis, $N\le 2^{\ell-1}-1<2^\ell-1$, as
	claimed.  Otherwise, we may choose the smallest $i$
	such that $F_i\to F_{i+1}$ is a $C_\ell$-reduction
	step.  Note that this implies that $\lt(C_\ell)$
	does not appear in the support of $F_j$ for all $j\ge i+1$.
	In other words, $F_0\to\cdots\to F_i$ and
	$F_{i+1}\to\cdots \to F_N$ are both $\calC_{\ell-1}$-derivations.
	By induction hypothesis, both these lengths are at most
	$a_{\ell-1} \le 2^{\ell-1}-1$.  Thus the length of \refeq{f0}
	is at most $2a_{\ell-1}+1 \le 2^{\ell}-1$.  Thus CLAIM A is proved.

	The last assertion of our proposition amounts
	to CLAIM B: $a_\ell\ge 2^\ell-1$.
	To show this claim, let $\calC_\ell=(C_1\dd C_\ell)$ as before.
	But we now
	choose $C_i \as \sum_{j=1}^i p_j$ where $p_j$'s are terms satisfying
    	$p_j\prec p_{j+1}$.  Let us write
    		$$F \xrightarrow[ k ]{ \calC } G$$
	to mean that there is a
	$\calC$-derivation of length $k$ from $F$ to $G$.
   	Our claim follows if we show that
    		$$C_\ell \xrightarrow[ 2^\ell-1 ]{ \calC_\ell } 0.$$
	The basis is obvious:
    		$C_1 \xrightarrow[ 1 ]{ \calC_l } 0.$
	Inductively, assume that
		\beql{bi1}
		C_{\ell-1} \xrightarrow[ 2^{\ell-1}-1 ]{ \calC_{\ell-1} } 0.
		\eeql
	The inductive assumption implies
		$$C_{\ell} = p_{\ell}+C_{\ell-1}
			\xrightarrow[ 2^{\ell-1}-1 ]{ \calC_{\ell-1}}
			p_{\ell}.$$
	Next, in one step, we have
		$p_{\ell} \xrightarrow[ 1 ]{ \calC_{\ell} } -C_{\ell-1}$
	and, again from the induction hypothesis,
		$$-C_{\ell-1}
			\xrightarrow[ 2^{\ell-1}-1 ]{ \calC_{\ell-1} } 0.$$
	Concatenating these 3 derivations, shows that
	$C_\ell\xrightarrow[ 2^\ell-1] {\calC_\ell} 0$.
	This proves CLAIM B.

    \ignore{
    To see the upper bound of $2^\ell-1$ is tight, let $\calB=(B_1\dd B_\ell)$
	where $B_1=p_1$ and $B_{i+1}= B_i + p_{i+1}$ ($i=1\dd \ell$).
	Here, $p_1\dd p_\ell$ are terms with $p_i\prec p_{i+1}$.
    Consider $F=B_\ell$. The worst reduction is executed
    in the above recursive way. For example,
    when $\ell=3$, $\calB=(p_1,p_1+p_2,p_1+p_2+p_3)$ and $F=p_1+p_2+p_3$.
    Then the worst reduction is carried out as follows:
    \[
    F=p_1+p_2+p_3\xrightarrow{B_1}p_2+p_3\xrightarrow{B_2}-p_1+p_3
        \xrightarrow{B_1}p_3\xrightarrow{B_3}-p_1-p_2
        \xrightarrow{B_1}-p_2\xrightarrow{B_2}p_1\xrightarrow{B_1}0.
    \]
    The total number of steps is $7$, which is equal to $2^3-1$.
	}
	\epf

\sect{Computing Gists via Solving Linear Equations}
\label{sec:gist}
    In this section, we introduce a direct method
    to compute gist of $F(\bfr)$ without preprocessing.
    Such methods depend on the choice of basis for
    $K^\delta\sym[\bfr]$. Our default basis is elementary symmetric polynomials.

    Our algorithm that takes as input
    $F\in K[\bfr]$ and $\bfmu$,
    and either outputs the $\bfmu$-gist $\ooF$ of $F$ or
    detects that $F$ is not $\bfmu$-symmetric.
    The idea is this: $F$ is $\bfmu$-symmetric iff $\ooF$ exists.
    The existence of $\ooF$ is equivalent to
    the existence of a solution to a linear system of equations.
    More precisely, there is an polynomial identity of the form
		$\ooF(\ole_1\dd \ole_n)=F.$
	To turn this identity into a system of linear equations, we
	first construct a polynomial
		$$G(\bfk;\bfz)\in K[\bfk][\bfz]$$
	in $\bfz$ with indeterminate coefficients in $\bfk$,
	with homogeneous weighted degree $\delta$ in $\bfz$
	(see Section \ref{ssec:lift+ideal} for definition of weighted degree).
    Here $\delta$ is the degree of $F$.
    Each term is of weighted degree $\delta$ and has the form
		$$\bfz_\bfalpha \as \prod_{i=1}^\delta z_{\alpha_i}$$
	where $\bfalpha=(\alpha_1\dd \alpha_\delta)$ is
	a weak partition of $\delta$ with parts at most $n$,
	i.e., $\bfalpha\vdash (\delta,n)$.
	\revised{
    Then $G(\bfk;\bfz)$ can be written as
		$$G(\bfk;\bfz)\as \sum_{\bfalpha\vdash (\delta,n)}
				k_\bfalpha \bfz_\bfalpha=T^\delta_n(\bfz)\cdot\bfk$$
	where
    $T^\delta_n(\bfz) \as
    (\bfz_\bfalpha: \bfalpha\vdash (\delta,n))$ and
	$\bfk \as (k_\bfalpha: \bfalpha\vdash (\delta,n))^T$
    viewed as a column vector are indeterminates.}
	Next, we plug in $\ole_i$'s for the $z_i$'s to get
		$$H(\bfk;\bfr)\as G(\bfk;\ole_1\dd \ole_n)$$
	viewed as a polynomial in
		$K[\bfk][\bfr].$
	We then set up the equation
		\beql{hbfk}
		H(\bfk;\bfr) = F(\bfr)
		\eeql
	to solve for the values of $\bfk$.
	Note that total degree of $G$ in $\bfk$ is $1$, i.e.,
	$\deg(G,\bfk)=1$.  Therefore, $\deg(H,\bfk)=1$.
	Thus \refeq{hbfk} amounts to solving a linear system
	of equations in $\bfk$.

	To illustrate this process, consider the
    polynomial $F=3r_1^2+2r_1r_2+r_2^2$ and $\bfmu=(2,1)$.
    \benum[Step 1:]
    \item Assign $\delta= \deg(F,\bfr)=2$ and $n= \sum_{i=1}^m\mu_i=3$.
    \item \revised{
            Since the weak partitions of $2$ with parts at most $3$ are
    		$(1,1)$ and $(2,0)$,}
		 the terms of weighted degree $2$ are $z_1^2$ and $z_2$.
    \item Construct the polynomial
	$G(\bfk;\bfz)\as k_1z_1^2+k_2z_2$
	where $\bfk=(k_1,k_2)$ are the indeterminate coefficients.
    \item Using $\ole_1=2r_1+r_2, \ole_2=r_1^2+2r_1r_2$,
	construct the polynomial
	$$H(\bfk;\bfr)\as G(\bfk;\ole_1\dd\ole_n)=
		(4k_1+k_2)r_1^2 + (4k_1+2k_2)r_1r_2 + k_1r_2^2.$$
    \item Extract the coefficient vector $\coeffs(H,\bfr)$
		of $H(\bfk;\bfr)$ viewed as a polynomial in $\bfr$.
		The entries of this vector are linear in $\bfk$.
		Thus $H= \coeffs(H,\bfr)\cdot T^\delta(\bfr)$
		where $T^\delta(\bfr)$ is the vector of
		all terms of $T(\bfr)$ of degree $\delta$.
    \item Extract the coefficient vector $\coeffs(F,\bfr)$
		of $F(\bfr)$.  This vector is a constant $(3,2,1)^T$
        \revised{where $T^2(r_1,r_2)=(r_1^2,r_1r_2,r_2^2)$}.
    \item The last two steps enables the construction of a system
	of linear equations, $A\bfk =\bfb$:
		\beqarrays
		H(\bfk;\bfr) &=& F(\bfr)\\
		(4k_1+k_2)r_1^2 + (4k_1+2k_2)r_1r_2 + k_1r_2^2 &=&
    			3r_1^2+2r_1r_2+r_2^2\\
		\coeffs(H,\bfr) &=& \coeffs(F,\bfr)\\
		\mmat{
		    4 & 1\\
		    4 & 2\\
		    1 & 0}\cdot\mmat{k_1\\k_2} &=& \mmat{3\\ 2\\ 1} \\
		    A \cdot \bfk &=& \bfb
		    \eeqarrays
	where the last equation is the linear system to be solved for
	$\bfk=\mmat{k_1\\k_2}$.
    \item
	If $A \bfk=\bfb$ has no solutions, we conclude that $F$ is
	not $\bfmu$-symmetric.  Otherwise, choose any solution
	for $\bfk$ and plugging into $G(\bfk;\bfr)$,
	we obtain a gist of $F(\bfr)$.
    \revised{
    Here $\bfk=\mmat{1\\-1}$ is a solution
    and thus the input polynomial is $(2,1)$-symmetric with gist $z_1^2-z_2$.}
	Note that there may be multiple solutions for $\bfk$
	because of the presence of $\bfmu$-constraints.
    \eenum

	We now summarize the above procedure as the \lsgist\ algorithm:

	\begin{figure}[h]
    \Ldent
	\progb{\\
	    $\lsgist(F,\bfmu)$:
	    \lline[0] \INPUt:~~ $F\in K^\delta[\bfr]$ and
	    		$\bfmu=(\mu_1\dd\mu_m)$
	    \lline[0] \OUTPUt: the $\bfmu$-gist of $F$ if $F$ is
	    	$\bfmu$-symmetric; otherwise
	    \lline[15] return ``$F$ is not $\bfmu$-symmetric".
	    \lline[8] $\delta\ass\deg(F,\bfr)$; $n\ass\sum_{i=1}^m\mu_i$
	    \lline[8] $G\ass
	    	\sum_{\bfalpha\vdash (\delta,n)} k_{\bfalpha}z_{\bfalpha}$
	    \lline[8] $H\ass G(\bfk;\ole_1\dd\ole_n)$
	    \lline[8] Extract $\coeffs(H,\bfr)$ and $\coeffs(F,\bfr)$.
	    \lline[8] Find a solution $\bfk=\bfk_0$ of the linear system
	    \lline[20] $\coeffs(H,\bfr)=\coeffs(F,\bfr)$.
	    \lline[8] If $\bfk_0$ is nondefined
	    \lline[16]  Return \emph{``$F$ is not $\bfmu$-symmetric''}
	    \lline[8] Else
	    \lline[16] Return $H(\bfk_0;\bfr)$
		\Ldent
		}
		\caption{The $\lsgist$ algorithm.}
        \label{fig:lsgist}
    \end{figure}

    The correctness of the algorithm $\lsgist$ lies in
    the fact that $F$ is $\bfmu$-symmetric iff $F\in K^\delta_\bfmu[\bfr]$
    which is generated by $\{\ole_{\bfalpha}: \bfalpha\vdash(\delta,n)\}$.

\sect{Gists Relative to Other Bases of $K^\delta\sym[\bfx]$}
\label{sec:OtherBases}
    In this section, We briefly sketch
    how to extend the above methods to computing gists relative to
    other bases of $K^\delta\sym[\bfx]$.

    The set $K\sym[\bfx]$ of symmetric functions
    can be viewed as a $K$-algebra
    generated by some finite set $\calG$.
    The following are three well-known choices of $\calG$ with $n$ elements each:
    \bitem
\item (Elementary symmetric polynomials)
        $\calG_e\as \set{\e_1\dd\e_n}$ where $e_i$ is the $i$-th elementary
	symmetric function of $\bfx$.
    \item (Power-sum symmetric polynomials)
        $\calG_p\as \set{p_1\dd p_n}$ where $p_i=x_1^i+\cdots+x_n^i$.
    \item (Complete homogeneous symmetric polynomials)
        $\calG_c\as \set{c_1\dd c_n}$
        where $c_i$ is the sum of all distinct monomials of
        degree $i$ in the variables $x_1\dd x_n$.
    \eitem

    For each $\delta\ge 1$, the vector space $K\sym^\delta[\bfx]$ of
    symmetric polynomials of degree $\delta$ has
    a basis $\calB^\delta$ that corresponds to
    a given generator set $\calG$.
    \ignore{%
    To describe $\calB$, we use index set
    $I^\delta_n\as\set{\bfalpha: \bfalpha\vdash(\delta,n)}$,
    comprising those weak partitions $\bfalpha$ of $\delta$
    with \revise{$\delta$ parts and} no part larger than $n$.
    }%
    The following are bases of $K\sym^\delta[\bfx]$:
    \bitem
\item (\dt{$e$-basis})
    	$\calB_e^\delta\as \set{e_{\bfalpha}: \bfalpha\vdash(\delta,n)}$
	where $e_{\bfalpha}=\prod_{i=1}^\delta e_{\alpha_i}$
	and $\bfalpha=(\alpha_1\dd\alpha_\delta)$;
    \item (\dt{$p$-basis})
    	$\calB_p^\delta\as \set{p_{\bfalpha}: \bfalpha\vdash(\delta,n)}$
	where $p_{\bfalpha}=\prod_{i=1}^\delta p_{\alpha_i}$;
    \item (\dt{$c$-basis})
    	$\calB_c^\delta\as \set{c_{\bfalpha}: \bfalpha\vdash(\delta,n)}$
	where $c_{\bfalpha}=\prod_{i=1}^\delta c_{\alpha_i}$.
    \eitem
    But $K\sym^{\delta}[\bfx]$ can also
    be generated with monomial symmetric polynomials.
    In this case, \revised{we use $\bfalpha\vdash (\delta)_n$
    to denote $\bfalpha=(\alpha_1\dd \alpha_n)$
    which is a weak partition of $\delta$ with}
    exactly $n$ parts: $\alpha_1\ge\cdots\ge\alpha_n\ge 0$.
    We also write $\bfx^\bfalpha$ for the product
    $\prod_{i=1}^n x_i^{\alpha_i}$.  This yields yet another basis
    for $K\sym^\delta[\bfx]$:
    \bitem
    \item
    (\dt{$m$-basis})
    $\calB_m^\delta\as \set{m_{\bfalpha}: \bfalpha\vdash(\delta)_n}$
    where
    $m_{\bfalpha}=\sum_{\bfbeta} \bfx^\bfbeta$ where
    $\bfbeta$ ranges over all permutations of $\bfalpha$ which
    are distinct.
    \eitem
    For instance, if $\bfalpha=(2,0,0)$ then
    $\bfbeta$ ranges over the set $\set{(2,0,0), (0,2,0),(0,0,2)}$
    and $m_\bfalpha=x_1^2+x_2^2+x_3^2$.

    So far, this paper has focused on the $e$-basis.
    But concepts and algorithms
    \revised{relative to the choice of this basis
    (e.g., the $\bfmu$-gist and $\ggist$)
    can be reformulated using the other bases.
    In each algorithm, there are two parameters,
    i.e., the generator polynomials (e.g., $e_i$ and $\ole_i$)
    and the index set (e.g., $\bfalpha\vdash (\delta,n)$).
    }
    When using $p$-basis or $c$-basis,
    we only need to replace $\ole_i$
    used by the algorithms $\ggist$, $\crgist$ and $\lsgist$
    by
    $\olp_i\as\sigma_{\bfmu}(p_i)$
    or $\olc_i\as\sigma_{\bfmu}(c_i)$,
    respectively;
    when using the $m$-basis, the index set
    $\bfalpha\vdash (\delta,n)$
    should be replaced by $\bfalpha\vdash (\delta)_n$ and
    $\ole_i$ should be replaced by
    $\olm_\bfalpha\as\sigma_{\bfmu}(m_\bfalpha)$.
    The relative performance of the algorithms using different bases
    will be evaluated in Section \ref{sec:experiments}.

    \ignore{%
    Chee asked:
    why does the above not mentioned the m-basis?
    I think we should say explicitly that
      \\ \lsgist\ and \crgist\ are the basic algorithms.
      \\ We have three variants for each of these basic algorithms.
      (Name them).
      \\ Another glaring gap is this:
      why don't we introduce variants of
      the \ggist\ algorithm?  It seems just as simple
      to introduce the ideals
      	$I^p_\bfmu \as \set{z_i-\olp_i:i=1\dd n}$
	or
      	$I^c_\bfmu \as \set{z_i-\olc_i:i=1\dd n}$.
      }

\sect{Experiments}\label{sec:experiments}
    In this section, we report some experimental results to show the
    effectiveness and efficiency of the two approaches presented
    in this paper. These experiments were performed using Maple
    on a Windows laptop with an Intel(R) Core(TM) i7-7660U CPU in 2.50GHz
    and 8GB RAM.

    In \refTab{experiment1}, we
    compare the performance of the three algorithms
    described in this paper for checking the $\bfmu$-symmetry
    of polynomials: \ggist, \lsgist\ and \crgist.
    We use a test suite of $12$ polynomials of degrees ranging from $6$--$20$
    (see \refTab{experiment1}), with corresponding $\bfmu$
    with $n=|\bfmu|$ ranging from $4$--$6$.
    These polynomials are either $\dplus$ polynomials
    or subdiscriminants, or some perturbations (to create
    non-$\bfmu$-symmetric polynomials).

    \begin{table}[htb]
    {\tiny %
    \centering \caption{\label{tab:experiment1}
	Comparing the performance of \ggist, \lsgist\ and \crgist.
        Computing the $\bfmu$-gist of $F$ of degree $\delta$.
	Here $n=\sum_{i=1}^m\mu_i$,
	$\canonize$ is a preprocessing step in
	\crgist\
    and total=$\canonize$ time + $\reduce$ time.
    }
    \begin{tabular}{|l|c|c|c|c|c|c|c|c|c|c|c|}\hline
    \multirow{3}{*}{~F} & \multirow{3}{*}{$\delta$} & \multirow{3}{*}{$\bfmu$}
    & \multirow{3}{*}{$n$} & \multirow{3}{*}{\!\!Y/N\!\!}  &{\ggist}  &  {\lsgist}
    & {speedup}  & \multicolumn{3}{c|}{\crgist}     & {speedup}\\
    \cline{9-11} & & & & & Time & Time & (\ggist/& $\canonize$
                        & $\reduce$ & total & (\ggist/\\
    & & & & &(sec)&(sec)&\lsgist\ )& (sec) & (sec) & (sec) & \crgist) \\
    \hline\hline
    {F1} &  12 & [1, 1, 1, 1] &  4 & Y  &  0.453 &  0.235 &    1.9 &  0.094 &  0.000 &  0.094 &    4.8 \\
 \hline
{F2} &   8 & [2, 1, 1] &  4 & Y  &  0.328 &  0.015 &   21.9 &  0.016 &  0.015 &  0.031 &   10.6 \\
 \hline
{F3} &  20 & [1, 1, 1, 1, 1] &  5 & Y  & \cored{34.1} & \cored{188} &    \cored{0.2} &  3.77 &  0.031 &  3.80 &    9.0 \\
 \hline
{F4} &  15 & [2, 1, 1, 1] &  5 & Y  & $>$600 &  1.88 &  $>$320 &  0.391 &  0.015 &  0.406 & $>$1478 \\
 \hline
{F4x} &   6 & [2, 1, 1, 1] &  5 & N  & $>$600 &  0.015 & $>$4$\times\!10^4$ &  0.000 &  0.016 &  0.016 & $>$3.7$\times\!10^4$ \\
 \hline
{F5} &   6 & [2, 2, 1] &  5 & Y  & 68.0 &  0.032 & 2126 &  0.000 &  0.000 &  0.000 &    Inf \\
 \hline
{F5x} &   6 & [2, 2, 1] &  5 & N  &  0.078 &  0.000 &    Inf &  0.000 &  0.016 &  0.016 &    4.9 \\
 \hline
{F6} &  10 & [2, 2, 1] &  5 & Y  &  0.438 &  0.078 &    5.6 &  0.031 &  0.000 &  0.031 &   14.1 \\
 \hline
{F6x} &  10 & [2, 2, 1] &  5 & N  &  0.406 &  0.047 &    8.6 &  0.031 &  0.016 &  0.047 &    8.6 \\
 \hline
{F7} &  18 & [3, 1, 1, 1] &  6 & Y  & $>$600 &  9.00 &   $>$66.7 &  3.39 &  0.063 &  3.45 &  $>$174 \\
 \hline
{F8} &  12 & [3, 2, 1] &  6 & Y  & $>$600 &  0.360 & $>$1667 &  0.187 &  0.000 &  0.187 & $>$3210 \\
 \hline
{F9} &   6 & [2, 2, 2] &  6 & Y  &  8.73 &  0.000 &    Inf &  0.000 &  0.000 &  0.000 &    Inf \\
 \hline

    \end{tabular}
    }
    \vspace{-0.5em}
    \end{table}

    From Table \ref{tab:experiment1}, it is clear that
    \lsgist\ is significantly faster than \ggist.
    There is one anomaly in the table: for the polynomial $F_3$,
    \ggist\ is 5 times faster than \lsgist.  This is when
    $\bfmu$ is $(1\dd 1)$, which indicates that the ideal
    $\calI_\bfmu=\left<v_1\dd v_n\right>$
    has a symmetric structure in $\bfr$.
    We believe it is because the Gr\"{o}bner basis of $\calI$ can be computed
    very efficiently for certain types of structures.

    \begin{table}[h]
    {\tiny
    \centering \caption{\label{tab:experiment2} Timing for
	computing the gists of $\bfmu$-symmetric polynomials
	\revised{
    with Gr\"{o}bner basis method using different bases
    (i.e., \eGgist, \pGgist\ and \cGgist),
    with \canonize+\reduce\
    using different bases
    (i.e., \eCRgist, \pCRgist, \cCRgist\ and \mCRgist)
    and with linear system solving using different bases
	(i.e., \eLSgist, \pLSgist, \cLSgist\ and \mLSgist).}
    The most efficient method for each case is marked with *
    next to the running time.
    }

    \begin{tabular}{|c|c|c|c|c|c|c|c|c|c|c|c|}\hline
    \multirow{3}{*}{\!\!\!F\!\!\!}
    & \multicolumn{3}{c|}{\!\!\!Gr\"{o}bner basis method\!\!\!}
    & \multicolumn{4}{c|}{\!\!\!\canonize+\reduce\!\!\!}
    & \multicolumn{4}{c|}{\!\!\!Linear system solving\!\!\!}
    \\\cline{2-12}
    &\!\!\!\eGgist\!\!\!&\!\!\!\pGgist\!\!\!&\!\!\!\cGgist\!\!\!
    &\!\!\!\eCRgist\!\!\!&\!\!\!\pCRgist\!\!\!&\!\!\!\cCRgist\!\!\!&\!\!\!\mCRgist\!\!\!
    &\!\!\!\eLSgist\!\!\!&\!\!\!\pLSgist\!\!\!&\!\!\!\cLSgist\!\!\!&\!\!\!\mLSgist\!\!\!\\
    & (sec) & (sec) & (sec)
    & (sec) & (sec) & (sec) & (sec)
    & (sec) & (sec) & (sec) & (sec) \\
\hline
    {F1} &  0.219 &  0.344 &  0.187 &  0.063 &  0.187 &  0.078 &  0.094 &  0.297 &  0.094 &  0.500 &  0.031\rstar  \\
 \hline
{F2} &   0.328 & 387 & 565 &  0.015 &  1.00 &  0.032 &  0.015 &  0.016 &  0.000\rstar &  0.031 &  0.000\rstar  \\
 \hline
{F3} &   20.9 & 61.2 & 60.3 &  3.19 & 313 & 14.7 &  2.17 & 79.3 &  3.11 & 2109 &  0.391\rstar  \\
 \hline
{F4} &  $>$3000 & $>$3000 & $>$3000 &  0.422 &  5.06 &  1.58 &  0.437 &  0.907 &  0.281 &  5.19 &  0.110\rstar  \\
 \hline
{F4x} &  $>$3000 & $>$3000 & $>$3000 &  0.000\rstar &  0.015 &  0.016 &  0.016 &  0.015 &  0.016 &  0.031 &  0.000\rstar  \\
 \hline
{F5} &  41.2 & $>$3000 & $>$3000 &  0.016 &  0.016 &  0.015 &  0.016 &  0.000\rstar &  0.015 &  0.016 &  0.000\rstar  \\
 \hline
{F5x} &  0.047 & $>$3000 & $>$3000 &  0.015 &  0.016 &  0.015 &  0.016 &  0.000\rstar &  0.016 &  0.015 &  0.000\rstar  \\
 \hline
{F6} & 0.234 & $>$3000 & $>$3000 &  0.047 &  0.094 &  0.093 &  0.063 &  0.031\rstar &  0.031\rstar &  0.063 &  0.032  \\
 \hline
{F6x} &  0.281 & $>$3000 & $>$3000 &  0.047 &  0.094 &  0.093 &  0.063 &  0.031\rstar &  0.031\rstar &  0.047 &  0.031\rstar  \\
 \hline
{F7} & $>$3000 & $>$3000 & $>$3000 &  3.50 & 63.3 & 29.8 &  3.50 &  5.63 &  1.70 & 49.2 &  1.52\rstar  \\
 \hline
{F8} & $>$3000 & $>$3000 & $>$3000 &  0.234 &  0.641 &  0.656 &  0.438 &  0.156 &  0.109\rstar &  0.250 &  0.157  \\
 \hline
{F9} &  6.17 & $>$3000 & $>$3000 &  0.016 &  0.000\rstar &  0.031 &  0.016 &  0.000\rstar &  0.015 &  0.031 &  0.016  \\
 \hline

    \end{tabular}
    }
    \end{table}

    From Table \ref{tab:experiment2}, we observe that
    the algorithm \mLSgist\ is more efficient
    than the other algorithms in general
    because it doesn't require polynomial expansion
    and thus can save a lot of time, especially
    when $\delta$ is big (see F3, F4 and F7).
    The algorithm \pLSgist\ also behaves well
    because power-sum symmetric polynomials have fewer terms than
    elementary symmetric polynomials and complete homogeneous symmetric
    ones and this property may help save time during polynomial expansion.
    Overall, algorithms based on \canonize+\reduce\
    are not as competitive as those based on linear system solving
    because the preprocessing procedure \canonize\ charges more time
    in order to generate a canonical sequence.

    \begin{table}[h]
    \scriptsize
    \centering \caption{\label{tab:experiment4} Timing for
	computing the gists using \ggist, \canonize+\reduce\ and
    linear system solving with $e$-basis
    when $\bfmu$ and $\delta$ are fixed. Here $\bfmu=(2,2,1)$ and $\delta=10$)}
    \begin{tabular}{|c|c|c|c|c|c|c|c|c|c|}\hline
       &        &\multicolumn{2}{c|}{\eGgist}    &\multicolumn{2}{c|}{\eCRgist}&\eLSgist\\
    \cline{3-6}
    F~ & Y/N    & $GroebnerBasis$ & $NormalForm$ & \canonize\ & \reduce\      & Time \\
       &        & Time (sec)      &  Time (sec)  & Time (sec) & Time (sec)    & (sec)\\
    \hline
    {F10} & Y &  \multirow{4}{*}{37.2}  &  0.188 &  \multirow{4}{*}{0.063}  &  0.000 &  0.063\\
    \cline{1-2}\cline{4-4}\cline{6-7}
    {F11} & Y &                           &  0.203 &                          &  0.016 &  0.016\\
    \cline{1-2}\cline{4-4}\cline{6-7}
    {F12} & Y &                           &  0.203 &                          &  0.000 &  0.046\\
    \cline{1-2}\cline{4-4}\cline{6-7}
    {F13} & N &                           &  0.344 &                          &  0.000 &  0.062\\
    \hline
    \multicolumn{2}{|c|}{Total time} &\multicolumn{2}{c|}{38.1} &  \multicolumn{2}{c|}{0.079\rstar} & 0.187\\
    \hline
    \end{tabular}
    \end{table}

    However, from Table \ref{tab:experiment4},
    we see that for fixed $\bfmu$ and $\delta$,
    once we have computed the canonical set in the preprocessing step,
    the time cost for $\reduce$ is small.
    \revised{
    Therefore, when evaluating the total time for several examples
    sharing the same canonical set,
    the algorithm based on \canonize+\reduce\
    can be superior to the linear solving method.
    }
    Although the Gr\"{obner} basis method also contains a preprocessing procedure,
    the time cost for computing normal forms is quite expensive
    and thus it is not as competitive as algorithms based on
    \canonize+\reduce\ and linear system solving.
    Furthermore, for algorithms using $p$-basis,
    the algorithm \pLSgist\ shows higher efficiency than \pGgist\ and \pCRgist,
    especially for big $\delta$ and $n$ (See F3, F4 and F7).
    This could be attributed to the small number of terms
    in the generator polynomials.
    In contrast, for $e$-basis and $c$-basis,
    the algorithms \eCRgist\ and \cCRgist\ prevail over
    \eLSgist\ and \cLSgist.
    The possible reason might be that many terms will get
    canceled when computing a canonical sequence.

\sect{Conclusion}\label{sec:conclusion}
    We have introduced the concept of $\bfmu$-symmetric polynomial
    which generalizes the classical symmetric polynomial.
    Such $\bfmu$-symmetric functions of the roots of a polynomial
    can be written as a rational function in its coefficients.
    Our original motivation was to study a conjecture that
    a certain polynomial $\dplus(\bfmu)$ is $\bfmu$-symmetric.
    \revised{
    In order to explore such properties for different $\bfmu$'s
    and other root functions,
    we introduce three algorithms to compute the
    $\bfmu$-gist of a polynomial (or detect that no such gists
    exist).
    With the help of these algorithms, we verified the $\bfmu$-symmetry of $\dplus$
    for many specific cases.
    In a companion paper \cite{yang-yap:dplus:20},
    we will prove the $\bfmu$-symmetry conjecture on $\dplus$
    and show its application in the complexity analysis of root clustering.
    }

\ignore{%
\section*{Appendix}

Assume that $K$ is a field over $\mathbb{Q}$ finitely generated by finitely many elements, and $\overline{K}$ stands for the algebraic closure of $K$. Denote by $\Omega_n(K)$ the set of irreducible polynomials of degree $n$ in $K[x]$ whose Galois groups are $S_n$. Set
$$
   \Gamma=\left\{\bfgamma\in \overline{K}^n \,\,\left | \prod_{i=1}^n(x-\gamma_i)\in \Omega_n(K) \right.\right\}.
$$
where $\bfgamma=(\gamma_1\dd\gamma_n)$.

\blem\label{LM:dense}
Suppose that $P\in K[\bfx]$. If $P(\bfgamma)=0$ for all $\bfgamma\in \Gamma$ then $P=0$.
\elem

\bpf
Suppose on the contrary that there is a nonzero polynomial $P\in K[\bfx]$ satisfying that $P(\bfgamma)=0$ for all $\bfgamma\in \Gamma$.
Let $e_i$ be the $i$-th elementary symmetric polynomial in $x_1,\cdots,x_n$.
Let
$$I=\langle P,\,\, y_1-e_1(\bfx), \cdots,y_n-e_n(\bfx)\rangle$$
and
$$J=\langle y_1-e_1(\bfx), \cdots,y_n-e_n(\bfx)\rangle \rangle $$
where $\langle * \rangle$ denotes the ideal in $K[\bfx,\bfy]$ generated by $*$. Then $J$ is a prime ideal of dimension $n$. It is easy to see that $P\notin J$. Hence $\dim(I)<\dim(J)=n$, which implies that there is a nonzero polynomial $h(\bfy)\in K[\bfy]\cap I$. Now denote
$$
   \Sigma=\left\{(\bfgamma,e_1(\bfgamma), \cdots,e_n(\bfgamma))\in \overline{K}^{2n} \,\,\left | \bfgamma\in \Gamma\right.\right\}.
$$
Then $\Sigma\subset \mathbb{V}(I)$ where $\mathbb{V}(I)$ denotes the set of zeroes of $I$ in $\overline{K}^{2n}$. Proposition 2 on the page 123 of \cite{serre:lectures:bk} and Remarks on the page 129 of the same reference imply that there is an irreducible polynomial $f(x)=x^n-c_{n-1}x^{n-1}+\cdots+(-1)^nc_0\in \Omega_n(K)$ such that $h(c_{n-1},\cdots,c_0)\neq 0$.  On the other hand, let $\gamma_1,\cdots,\gamma_n$ be all roots of $f(x)=0$ in $\overline{K}$. Then $(\gamma_1,\cdots,\gamma_n, c_{n-1},\cdots,c_0)\in \Sigma$. Hence $h(c_{n-1},\cdots,c_0)=0$, a contradiction.
\epf

\bprol{symmetry}
Suppose that $P\in K[\bfx]$. If $P(\bfgamma)\in K$ for all $\bfgamma\in \Gamma$ then $P$ is symmetric.
\eprol

\bpf
For each $\sigma\in S_n$, denote $Q_\sigma=\sigma(P)-P$ where $\sigma(P)=P\left(x_{\sigma(1)}, \cdots,x_{\sigma(n)}\right)$. As $P(\bfgamma)\in K$, due to the Galois theory,
$$\sigma(P)(\bfgamma)=\sigma(P(\bfgamma))=P(\bfgamma)$$
for all $\sigma$ in the Galois group of $K(\bfgamma)$ over $K$, where $K(\bfgamma)$ denotes the field over $K$ generated by the entries of $\bfgamma$. For each $\bfgamma\in \Gamma$, we have that the Galois group of $K(\bfgamma)$ over $K$ is $S_n$. Hence $Q_\sigma(\bfgamma)=0$ for all $\sigma\in S_n$ and all $\bfgamma\in \Gamma$. By Lemma~\ref{LM:dense}, $Q_\sigma=0$ for all $\sigma\in S_n$. Therefore $P$ is symmetric.
\epf
}%

\ignore{
\bibliographystyle{abbrv}
\bibliography{test,st,yap,exact,geo,alge,math,com,rob,cad,algo,visual,gis,quantum,mesh,tnt,fluid,p}

\section*{Acknowledgments}
The authors would like to thank Professor Dongming Wang for his
generous support and the use of the resources of the SMS International.
This work is done during Chee's sabbatical year in China with
the generous support of GXUN, Beihang University and Chinese Academy
of Sciences in Beijing.

\bibliographystyle{siamplain}

\end{document}